\newcommand{\lv}[1]{#1}
\newcommand{\sv}[1]{}
  \let\apxmark\relax  
  \def\apxmark{{\sf\bfseries$\!$*\,}}
\newif\ifblindre
  \let\apxmark\relax  
\newtheorem{definition}{Definition}[section]
\newtheorem{theorem}[definition]{Theorem}
\newtheorem{lemma}[definition]{Lemma}
\newtheorem{proposition}[definition]{Proposition}
\newtheorem{claim}[definition]{Claim}
\newtheorem{remark}[definition]{Remark}
\newcommand{\Ucr}{\ensuremath{\mathrm{ucr}}}
\newcommand{\Ucrk}[1]{\ensuremath{\mathrm{ucr}_{#1}}}
\newcommand{\U}{\ensuremath{\mathrm{unc}}}
\newcommand{\Um}{\ensuremath{\mathrm{ounc}}}
\def\ca#1{\mathcal{#1}}
\def\crd{\operatorname{cr}} %
\def\crg{\operatorname{cr}} %
\def\tcrn{\mathop{\rm tcr}}
\tikzstyle{vertex}=[circle, fill, inner sep=1pt]
\begin{document}
\title{Minimizing an Uncrossed Collection of Drawings%
  \thanks{An extended abstract of this manuscript appeared in the Proceedings of the 31st International Symposium on Graph Drawing and Network Visualization (GD 2023)~\cite{OurGD}.}
}

\lv{
  \author[1]{Petr Hliněný}%
  \author[2]{Tomáš Masařík%
  \thanks{Tomáš Masařík was supported by the Polish National Science Centre SONATA-17 grant number 2021/43/D/ST6/03312.}}
  \affil[1]{Faculty of Informatics of Masaryk University, Brno, Czech Republic 

  \texttt{hlineny@fi.muni.cz}}
  \affil[2]{Institute of Informatics, Faculty of Mathematics, Informatics and Mechanics, University~of~Warsaw, Warszawa, Poland 

  \texttt{masarik@mimuw.edu.pl}}
  \date{}
}
\maketitle              %
\begin{abstract}
In this paper, we introduce the following new concept in graph drawing.
Our task is to find a small collection of drawings such that they all together satisfy some property that is useful for graph visualization.
We propose investigating a property where each edge is not crossed in at least one drawing in the collection.
We call such collection \emph{uncrossed}.
This property is motivated by a quintessential problem of the crossing number, where one asks for a drawing where the number of edge crossings is minimum.
Indeed, if we are allowed to visualize only one drawing, then the one which minimizes the number of crossings is probably the neatest for the first orientation.
However, a collection of drawings where each highlights a different aspect of a graph without any crossings could shed even more light on the graph's structure.

We propose two definitions. 
First, the \emph{uncrossed number}, minimizes the number of graph drawings in a collection, satisfying the uncrossed property.
Second, the \emph{uncrossed crossing number}, minimizes the total number of crossings in the collection that satisfy the uncrossed property.
For both definitions, we establish initial results.
We prove that the uncrossed crossing number is NP-hard, but there is an \FPT{} algorithm parameterized by the solution size.

\sv{%
\keywords{Crossing Number \and Planarity \and Thickness \and Fixed-parameter Tractability.}
}
\lv{%
  \smallskip{} \begin{center} 
    Keywords: Crossing Number, Planarity, Thickness, and Fixed-parameter Tractability.
\end{center}
\smallskip
}
\end{abstract}

\section{Introduction}

Determining the \emph{crossing number} $\crg(G)$ of a graph $G$, i.e.\ the smallest possible number $k=\crg(G)$ of pairwise transverse intersections 
(called {\em crossings}) of edges in any drawing of~$G$ is among the most important theoretical problems in graph drawing.
Its computational complexity is well-researched.
It is known that graphs with crossing number $k=0$, i.e.\ planar graphs, can be easily recognized.
Computing the crossing number $k$ of a graph is \NP-hard~\cite{GareyJ83}, 
even in very restricted settings of cubic graphs~\cite{Hlineny06}, of a fixed rotation system~\cite{DBLP:journals/algorithmica/PelsmajerSS11}
and of near-planar graphs~\cite{CabelloM13},
It is also \APX-hard~\cite{Cabello13}.
However, there is a fixed-parameter algorithm for the problem~\cite{Grohe04} with respect to the number of crossings~$k$, 
and even one that can solve the problem in linear time $\mathcal{O}(f(k)n)$~\cite{KawarabayashiR07}.

More recently, various {graph drawing extension} problems motivated by graph and network visualization received increased attention.
In relation to the traditional crossing number problem, we in particular, mention works on the edge insertion problem
\cite{DBLP:conf/icalp/ChimaniH11,DBLP:conf/compgeom/ChimaniH16,DBLP:journals/algorithmica/GutwengerMW05}
and on the crossing number in a partially predrawn setting \cite{GanianHKPV21,DBLP:conf/compgeom/HammH22}.
Such drawing extension solutions can be useful, for instance, in situations in which a particular region or a feature of a graph should be highlighted,
and the rest of the drawing is then completed as nicely as possible while respecting the highlighted part.
It is, however, more common that we want to somehow nicely display every part of the given graph.
That is, of course, problematic with only one drawing, but what if we are allowed to produce several drawings of the same graph or network at once?
On the experimental side, the same idea was actually raised by Biedl, Marks, Ryall, and Whitesides~\cite{DBLP:conf/gd/BiedlMRW98} quite long time ago, 
but does not seem to be actively pursued nowadays.

Motivated by the latter thoughts, this paper proposes and studies the following concept of looking at graph drawings. 
Instead of finding a single ``flawless'' drawing (in the plane), we look for a small collection of drawings that fulfill some visualization property(ies) ``together'', meaning in their union. 
In that way, each drawing in the collection spotlights a different part of the graph's structure.
This seems to be a promising concept as minimizing just one property of a drawing globally, which is traditionally done, 
might produce a drawing that is nowhere nice locally.
Therefore, having to produce only one solution may lead to visualizations that are not as easy to understand.

In the proposed approach, we aim to produce a (smallest) collection of drawings of a graph that all together admit a property useful for visualizing the graph structure.
There might be several different sensible candidates for the property.
Inspired by the crossing number, we consider a property that requires each edge not to be crossed in at least one drawing in the collection.
Therefore, we produce a collection of drawings that emphasize different aspects of a graph in different drawings.
Formally, we define the uncrossed number of a graph as follows (see \Cref{sec:prelim} for the necessary standard definitions).

\begin{definition}[Uncrossed number]
  Let $G=(V,E)$ be a graph.
  A family of drawings $D_1,\ldots,D_{k}$ of~$G$,
  such that each edge $e\in E$ is not crossed in some drawing $D_i$, $1\leq i\leq k$, is called an \emph{uncrossed collection of drawings} of~$G$.
  The \emph{uncrossed number} of a graph $G$, denoted by $\U(G)$, is defined as the least cardinality $k$ of an uncrossed collection of drawings of~$G$.
\end{definition}

The uncrossed number is related to the traditional \emph{thickness} of a graph (the thickness of $G$ ($\theta(G)$) is the least number of planar subgraphs of $G$ whose union is whole~$G$).
Obviously, the thickness of $G$ is at most $\U(G)$, and $G$ is of thickness~$1$ (planar) if and only if $\U(G)=1$.
On the other hand, the uncrossed number is upper bounded by \emph{outerthickness} of a graph (the outerthickness of $G$ ($\theta_o(G)$) is the least number of outerplanar subgraphs of $G$ whose union is whole~$G$).
Indeed, we can draw each of the outerplanar graphs without crossings and draw all the other edges of $G$ inside the outerface.
It is valuable to know that a planar graph can be edge partitioned into two outerplanar graphs as proven by Gonçalves in 2005~\cite{Goncalves}.
We summarize all the relations above; For any $G$: \[\theta(G)\le \U(G)\le \theta_o(G)\le2\theta(G).\]
To give a bit more intuition, we argue in Proposition~\ref{prop:k7} that $\U(K_7)=3$ which is the same bound as in the outerthickness, $\theta_o(K_7)=3$~\cite{Guy1990}, while indeed $\theta(K_7)=2$.

Thickness is a very well-studied graph parameter; consult an initial survey~\cite{DBLP:journals/gc/MutzelOS98} or more recent papers~\cite{DjumovicWood,Duncan} for further details.
It used to be on Garey \& Johnson's list of open problems but was proven to be \NP-complete already in 1983 by Mansfield~\cite{Mansfield}. Surprisingly, the complexity of outerthickness has yet to be uncovered.
A recent paper by Xu and Zha~\cite{XuZha} provides bounds based on the genus of the graph, but in many other aspects, outerthickness still needs to be explored.

Analogously as the uncrossed number is an alternative counterpart of planarity in our approach,
the graph crossing number has a natural counterpart in the total crossing number of an uncrossed collection of drawings.
Informally, the task is to find a collection of drawings satisfying the uncrossed property such that it has the smallest number of crossings summed over the collection.

\begin{definition}[Uncrossed crossing number]\label{def:uncrcr}
  Let $G=(V,E)$ be a graph and $k\geq1$ an integer.
  The \emph{uncrossed crossing number in $k$ drawings} of a graph~$G$, denoted by $\Ucrk{k}(G)$, 
  equals the least sum $\crd(D_1)+\cdots+\crd(D_{k'})$ over all uncrossed collections $D_1,\ldots,D_{k'}$ of drawings of~$G$ where~$k'\leq k$,
  or \mbox{$\Ucrk{k}(G)=\infty$} if no uncrossed collection exists.

  The \emph{uncrossed crossing number} of $G$ is defined as $\Ucr(G)\coloneqq \min_k\Ucrk{k}(G)$.
  The least $k$ for which $\Ucr(G)=\Ucrk{k}(G)$ is called the \emph{crossing-optimal uncrossed number} $\Um(G)$.
\end{definition}

Observe that $\Ucr(G)$ is a well-defined integer since there always exists an uncrossed collection of $k=|E(G)|$ drawings of~$G$.
Furthermore, $\U(G)\le\Um(G)$,
and $\Um(G)\leq|E(G)|$ since every uncrossed collection of $\,>\!|E(G)|$ drawings obviously contains a strict uncrossed subcollection.
On the other hand, one can easily construct a family of graphs $G$ such that $\U(G)=2$ and $\Um(G)$ is unbounded; see Proposition~\ref{prop:unc2large} and \Cref{fig:cycleMob}.
We also have obvious $\Ucr(G)\geq\Um(G)\cdot\crg(G)$, but the gap in this inequality can be arbitrarily large---consider, e.g., the complete graph $K_5$ with every edge except two disjoint ones blown up to many parallel edges, as in Proposition~\ref{prop:ucr2large}. 

We also provide some useful estimates on the uncrossed crossing number.
We, for instance, give the asymptotics of the uncrossed crossing number for the complete graph in Proposition~\ref{prop:Kn},
and we give an asymptotically tight analogy of the classical Crossing Lemma for the uncrossed crossing number of graphs with sufficiently many edges in \Cref{thm:lbUcr}.

As discussed, the crossing number is \NP-hard even in various restricted settings.
We prove that the uncrossed crossing number is \NP-complete in \Cref{sec:hard} building on ideas of Hliněný and Derňár~\cite{DBLP:conf/compgeom/HlinenyD16}.
We complement this result by presenting an \FPT{} algorithm parameterized by its size in \Cref{sec:FPT}.
This is analogous to Grohe's \FPT{} algorithm for the crossing number parameterized by its size~\cite{Grohe04}.

\section{Preliminaries}\label{sec:prelim}

In this paper, we consider multigraphs by default, i.e., our graphs are allowed to have multiple edges (while loops are irrelevant here),
with the understanding that we can always subdivide parallel edges without changing the crossing number.

A \emph{drawing} $D$ of a graph $G$ in the Euclidean plane $\mathbb{R}^2$ is a function that maps each vertex $v \in V(G)$ to a distinct point 
$D(v) \in \mathbb R^2$ and each edge $e=uv \in E(G)$ to a simple open curve $D(e) \subset \mathbb R^2$ with the ends $D(u)$ and $D(v)$.
We require that $D(e)$ is disjoint from $D(w)$ for all~$w\in V(G)\setminus\{u,v\}$.
Throughout the paper, we will moreover assume that:
there are finitely many points that are in an intersection of two edges, no more than two edges intersect in any single point other than a vertex,
and whenever two edges intersect in a point, they do so transversally (i.e., not tangentially).

An intersection (a common point) of two edges is called a \emph{crossing} of these edges.
A drawing $D$ is \emph{planar} (or a {\em plane graph}) if $D$ has no crossings, and a graph is \emph{planar} if it has a planar drawing.
The number of crossings in a drawing $\ca G$ is denoted by $\crd(D)$.
The (ordinary) {\em crossing number $\crg(G)$ of $G$} is defined as the minimum of $\crd(D)$ over all drawings $D$ of~$G$.

The following is a useful artifice in crossing numbers research. 
In a {\em weighted} graph, each edge is assigned a positive integer (the {\em weight} of the edge).
Now the {\em weighted crossing number} is defined as the ordinary crossing number, but a crossing between edges $e_1$ and $e_2$,
say of weights $t_1$ and $t_2$, contributes the product $t_1\cdot t_2$ to the weighted crossing number.
For the purpose of computing the crossing number, an edge of integer weight $t$ can be equivalently replaced by a bunch of $t$ parallel edges of weights~$1$;
this is since we can easily redraw every edge of the bunch tightly along the ``cheapest'' edge of the bunch.
The same argument holds also with our new \Cref{def:uncrcr}.

\section{Basic properties}

To give the readers a better feeling about the new concept(s) before moving onto the computational-complexity properties, 
we first present some basic properties of the uncrossed and the uncrossed crossing number.

We start by inspecting the uncrossed number.

\begin{proposition}\label{prop:k7}
  We have $\U(K_7)=3$.
\end{proposition}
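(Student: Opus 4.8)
The plan is to establish the matching bounds $\U(K_7)\le 3$ and $\U(K_7)\ge 3$ separately. The upper bound is immediate from the relation $\U(G)\le\theta_o(G)$ proved in the introduction together with the known value $\theta_o(K_7)=3$: fixing a partition of the $21$ edges of $K_7$ into three outerplanar graphs $O_1,O_2,O_3$, in the $i$-th drawing I draw $O_i$ with all seven vertices on the outer face and route every remaining edge through that outer face, so each edge of $O_i$ is uncrossed in that drawing, and every edge lies in some $O_i$.

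For the lower bound I argue by contradiction, assuming an uncrossed collection $D_1,D_2$. The first step is the key observation that the set $H_i$ of edges left uncrossed in $D_i$ forms a plane subgraph with the stronger property that \emph{every} two vertices of $K_7$ lie on a common face of $H_i$: indeed, every remaining edge of $K_7$ is drawn inside the faces of $H_i$ without crossing $H_i$, so its two endpoints must be cofacial. Call such a subgraph \emph{good}. Since the collection is uncrossed, $E(H_1)\cup E(H_2)=E(K_7)$.

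The heart of the proof is a bound on good subgraphs: a plane graph on $n$ vertices in which every pair of vertices is cofacial has at most $2n-2$ edges, with the wheel (a cycle on $n-1$ vertices plus a universal vertex drawn inside it) attaining equality; for $n=7$ this yields $|E(H_i)|\le 12$. I expect the main obstacle to be proving this bound, because the naive face-counting inequality $\sum_{\text{faces}}\binom{\ell}{2}\ge\binom{n}{2}$ is already satisfied by triangulations (which are \emph{not} good, as their cofacial pairs are exactly their edges), so a genuinely topological, rather than merely numerical, argument is required. A second difficulty is that the estimate $12+12\ge 21$ by itself does \emph{not} contradict $E(H_1)\cup E(H_2)=E(K_7)$, so the structure of near-extremal good subgraphs must also be pinned down.

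To close the argument I use $|E(H_1)|+|E(H_2)|\ge 21$ to assume $|E(H_1)|\ge 11$, classify the good subgraphs on seven vertices with $11$ or $12$ edges (these are essentially the wheel $W_6$, the maximal outerplanar graphs, and small edge modifications thereof), and in each case show that the complement $\overline{H_1}$ cannot sit inside a second good subgraph $H_2$. Concretely, when $H_1$ is the wheel $W_6$ its complement on the rim vertices is the triangular prism, and every planar embedding of the prism has all faces of size at most $4$; hence the seventh, central vertex cannot be cofacial with all six rim vertices in $H_2$, the desired contradiction. The remaining near-extremal cases follow the same template, each forcing some vertex to be cofacial with six others through faces too small to host it.
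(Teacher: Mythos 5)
Your reduction to ``good'' subgraphs is exactly the paper's route: the cofaciality condition you isolate is what the paper calls the \emph{neighborhood property}, and the upper bound $\U(K_7)\le 3$ via $\U\le\theta_o$ and $\theta_o(K_7)=3$ is the paper's as well. Your elimination of the wheel case is correct and in fact cleaner than the paper's: since $H_2$ must contain $\overline{C_6}$, the triangular prism, on the six rim vertices, and the prism is $3$-connected with every face carrying at most four vertices, the hub lies inside one prism face and can be cofacial in $H_2$ with at most four of the six rim vertices it owes; the same argument even disposes of the paper's other central-vertex case (hub joined to five of the six rim vertices, $11$ edges), for which the paper instead uses an outerplanarity/degree count.

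However, the proposal has genuine gaps at precisely the two places you flag. First, the extremal statement ``a good plane graph on $n$ vertices has at most $2n-2$ edges,'' and more importantly the classification of good subgraphs of $K_7$ with $11$ or $12$ edges, are asserted rather than proved. This classification is the actual core of the paper's proof: its Claim shows that, choosing a longest face as the outer face, a good subdrawing with at least $11$ edges has at most one vertex off the outer face, lying on every interior face; proving it requires separate arguments excluding two, three, and four interior vertices, each a nontrivial use of the neighborhood property. Your phrase ``essentially the wheel $W_6$, the maximal outerplanar graphs, and small edge modifications thereof'' names the answer but supplies none of this work, and your own diagnosis that Euler-type counting cannot deliver the bound is correct. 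Second, the assertion that ``the remaining near-extremal cases follow the same template'' fails for the maximal outerplanar case ($H_1$ a triangulated $7$-gon with $11$ edges): there the complement $\overline{H_1}$ is a $10$-edge graph whose shape varies over the nonisomorphic triangulations of the heptagon, it is not a rigid $3$-connected graph with uniformly small faces, and except for the fan it has no isolated vertex, so no single vertex is forced into a face of a fixed subdrawing the way the hub is in the wheel case. The paper needs a different mechanism here: $\theta_o(K_7)=3$ rules out an outerplanar $H_2$, and then the two nonadjacent degree-$2$ ears of $H_1$, each of which must realize four edges in $D_2$, yield a separation contradiction. So the skeleton matches the paper and one branch is genuinely closed, but the classification step and the outerplanar branch remain unproven.
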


\begin{proof}
  We say that drawing \emph{realizes} an edge if it is not crossed in that drawing.
  Suppose that drawings $D_1$ and $D_2$ exist.
Let $D_1$ be the drawing that realizes at least as many edges as $D_2$.
As a graph on the uncrossed edges needs to be planar, the number of realized edges in $D_1$ ranges from 11 (As $K_7$ has $21$ edges in total) to 15 ($3\cdot 7 -6$).
Consider faces that are defined by edges that are not crossed in drawing $D_i$, $i\in\{1,2\}$.
We call them \emph{uncrossed faces} of $D_i$ and we denote by $D_i'$ the subdrawing of $D_i$ consisting of all uncrossed faces.

 As we draw $K_7$, for each pair of vertices, there needs to be a common uncrossed face containing both.
 We call it the \emph{neighborhood property}.

 \begin{claim}%
   Drawing $D'_1$ consists of the outerface with at most one vertex inside.
   We denote such vertex \emph{central} (in drawing $D_1'$).
   Moreover, the central vertex (if it exists) is contained in all other faces except for the outerface.
\end{claim}
\noindent\textit{Proof of the claim.}
Observe that each vertex needs to have at least two realized edges.
If it has at most one then all other vertices have to be on one common face and hence they form an outerplanar graph, so there are at most $2\cdot6-3+1<11$ realized edges.
Assume that the outerface is a longest face and observe that there are no chords as this violates the neighborhood property.

Suppose that there are two vertices inside the outerface.
  Then those need to realize at least six edges on their neighborhood which is not possible as each vertex inside can share uncrossed edges with at most two vertices of the outerface otherwise we violate the neighbourhood property.

  Suppose that there are three vertices inside the outerface.
  Then those need to realize at least seven edges on their neighborhood.
  If there is a vertex $v$ that does not realize any edge with the outerface then $v$ is on a face of length at least $5$ or violates the neighborhood property.
  Now, let $v$ be a vertex that realizes exactly $1$ edge with the outerface.
  To preserve the neighborhood property and not to create a face longer than $4$, $v$ needs to have $3$ realized edges.
  Let $v_1,v_2$ be the other two vertices inside the outerface.
  If edge $v_1v_2$ is realized then both $v_1,v_2$ cannot realize more than $1$ edge on the outerface which is not enough.
  Hence, $v_1v_2$ is not realized, but then $v_1$ and $v_2$ have to both realize an edge with the same vertex on the outerface, else they create a face longer than $4$.
  However, then there is no common face of either $v_1$ or $v_2$ with a vertex on the outerface.

  If there are four vertices inside the outerface then the uncrossed edges need to form a triangulation which is not possible as this means no crossing edges in the drawing.
\hfill$\diamondsuit$

\lv{\medskip}

  The claim leaves only three possibilities for $D_1'$; 
  \begin{description}
    \item[\rm Case 1:]\phantomsection\label{c1} $D'_1$ form an outerplanar triangulation drawing (case of 11 realized edges), but due to $\theta_o(K_7)=3$~\cite{Guy1990} $D'_2$ could not be outerplanar in this case,
    \item[\rm Case 2:]\phantomsection\label{c2} $D'_1$ has one central vertex contained in 4 uncrossed triangles, one uncrossed 4-cycle, and the outerface has length 6 (case of 11 realized edges),
    \item[\rm Case 3:]\phantomsection\label{c3} $D'_1$ has one central vertex contained in 6 uncrossed triangles, and the outerface has length 6 (case of 12 realized edges).
  \end{description}

  In Cases \hyperref[c2]{2} and \hyperref[c3]{3}, we need to realize at least three edges of each vertex except for the central one $v_c$ (of $D'_1$).
Moreover, in $D'_2$, we need to put $v_c$ to one face, even though we do not need to realize more than one of its edges, say that this will be the outerface.
As all the other vertices are on a common face, their realized edges form an outerplanar graph on six vertices.
Therefore, there are two vertices of degree $2$ in $D'_2$ which contradicts the option that for each of them, we need to realize three edges.

In \hyperref[c1]{Case 1}, we already know that $D'_2$ cannot be outerplanar.
As $D'_1$ is outerplanar there has to be two non-adjacent vertices $v_1$ and $v_2$ that each realize $4$ edges in $D_2$. This implies that there are at least three vertices as their common neighbors in $D'_2$ and thus they create three separate faces.
No matter which face we put the remaining vertices in they will be separated from the vertex not on that face.
\end{proof}

\smallskip
We continue with a few easy facts about the uncrossed crossing number.

\begin{figure}[ht]
$$
\begin{tikzpicture}[xscale=1,yscale=1]
\tikzstyle{every node}=[draw, shape=circle, minimum size=2pt,inner sep=1.4pt, fill=black]
\node at (0,2.2) (x1) {};
\node at (0.9,2.1) (x2) {}; \node at (1.7,1.8) (x3) {};
\node at (2.4,1.3) (x4) {}; \node at (2.85,0.65) (x5) {};
\node at (3.0,0) (x6) {};
\node at (2.4,-1.3) (x7) {}; \node at (2.85,-0.65) (x8) {};
\node at (0.9,-2.1) (x9) {}; \node at (1.7,-1.8) (x10) {};
\node at (0,-2.2) (y1) {};
\node at (-0.9,-2.1) (y2) {}; \node at (-1.7,-1.8) (y3) {};
\node at (-2.4,-1.3) (y4) {}; \node at (-2.85,-0.65) (y5) {};
\node at (-3.0,0) (y6) {};
\node at (-2.4,1.3) (y7) {}; \node at (-2.85,0.65) (y8) {};
\node at (-0.9,2.1) (y9) {}; \node at (-1.7,1.8) (y10) {};
\tikzstyle{every path}=[draw, fill=none]
\foreach \i in {1,...,10} { \draw (x\i) to[bend right=6] (y\i) ; }
\begin{scope}[on background layer]
\draw[color=gray, ultra thick,fill=none] (0,0) ellipse (30mm and 22mm);
\end{scope}[on background layer]
\end{tikzpicture}
\quad
\begin{tikzpicture}[xscale=0.9,yscale=0.8]
\tikzstyle{every node}=[draw, shape=circle, minimum size=2pt,inner sep=1.4pt, fill=black]
\node at (3,1.5) (x1) {};
\node at (2.4,1.2) (x2) {}; \node at (1.8,0.9) (x3) {};
\node at (1.2,0.6) (x4) {}; \node at (0.6,0.3) (x5) {};
\node at (-0.6,-0.3) (x6) {};
\node at (-1.2,-0.6) (x7) {}; \node at (-1.8,-0.9) (x8) {};
\node at (-2.4,-1.2) (x9) {}; \node at (-3,-1.5) (x10) {};
\node at (3,-1.5) (y1) {};
\node at (2.4,-1.2) (y2) {}; \node at (1.8,-0.9) (y3) {};
\node at (1.2,-0.6) (y4) {}; \node at (0.6,-0.3) (y5) {};
\node at (-0.6,0.3) (y6) {};
\node at (-1.2,0.6) (y7) {}; \node at (-1.8,0.9) (y8) {};
\node at (-2.4,1.2) (y9) {}; \node at (-3,1.5) (y10) {};
\tikzstyle{every path}=[draw, fill=none]
\foreach \i in {1,...,10} { \draw (x\i) to[bend left=9] (y\i) ; }
\begin{scope}[on background layer]
\draw[color=gray, ultra thick,fill=none] (-3,1.5) to[out=150,in=30] (3,1.5) -- (-3,-1.5)
	to[out=-150,in=-30] (3,-1.5) -- (-3,1.5);
\end{scope}[on background layer]
\end{tikzpicture}
$$
\caption{A illustration of the graph $G$, for $m=10$, from the proof of Proposition~\ref{prop:unc2large};
	this graph has $\U(G)=2$ based on an uncrossed collection of two drawings which both look as depicted on the right.}
\label{fig:cycleMob}
\end{figure}
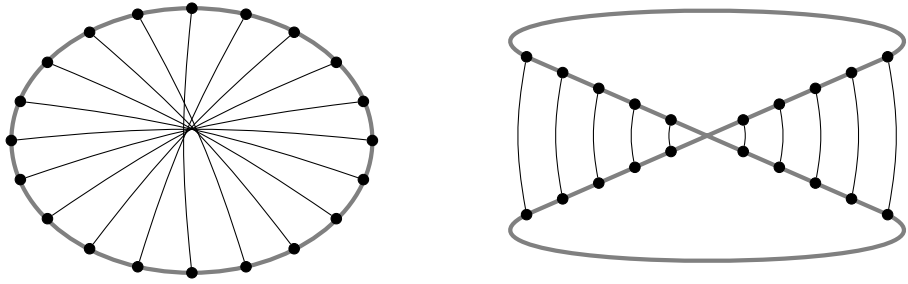

\sv{%
\begin{proposition2rep}\label{prop:unc2large}%
  For every integer $m$ there exists a graph $G$ on $2m$ vertices such that ${\U(G)=2}$ and $\Um(G)\geq m$ (see \Cref{fig:cycleMob}).
\end{proposition2rep}
}
\lv{%
\begin{proposition}\label{prop:unc2large}%
  For every integer $m$ there exists a graph $G$ on $2m$ vertices such that ${\U(G)=2}$ and $\Um(G)\geq m$ (see \Cref{fig:cycleMob}).
\end{proposition}
}
\begin{proof}
Let $G$ be constructed as follows: start with the cycle $C=C_{2m}$ on $2m$ vertices, make every edge of this cycle ``heavy'' of weight $m^3$,
and add the $m$ edges joining the opposite pairs of vertices on the cycle.
As \Cref{fig:cycleMob} easily shows, we have $\U(G)=2$.
There also exists an uncrossed collection of $m$ drawings of~$G$, each with $m-1\choose2$ crossings, obtained by separately ``flipping''
each one of the chords of the cycle $C$ in \Cref{fig:cycleMob} (left) to the outer face.
So, $\Ucr(G)\leq m{m-1\choose2}$.

Since $2\cdot\Ucr(G)<m^3$, any collection of drawings attaining $\Ucr(G)$ must leave the cycle $C$ (of weight at least~$1$) uncrossed.
And since $C$ with any three of the $m$ chords form a subdivision of nonplanar $K_{3,3}$, at most one of the chords can stay uncrossed
in a drawing of $G$ with uncrossed~$C$. Hence, $\Um(G)\geq m$.
\end{proof}

\sv{%
\begin{proposition2rep}\label{prop:ucr2large}%
For every integer $m\geq3$ there exists a graph $G$ such that \mbox{$\crg(G)=1$} and $\Ucr(G)=2m$.
Namely, $G$ can be obtained from the graph $K_5$ by selecting two disjoint edges and making every other edge ``heavy'' of weight~$m$.
\end{proposition2rep}
}
\lv{%
\begin{proposition}\label{prop:ucr2large}%
For every integer $m\geq3$ there exists a graph $G$ such that \mbox{$\crg(G)=1$} and $\Ucr(G)=2m$.
Namely, $G$ can be obtained from the graph $K_5$ by selecting two disjoint edges and making every other edge ``heavy'' of weight~$m$.
\end{proposition}
}
\begin{proof}
Let the two selected disjoint edges of $K_5$ in $G$ be $e_1$ and~$e_2$.
Then $\crg(G)=1$ witnessed by a drawing of $G$ in which only $e_1$ crosses with $e_2$,
and $\Ucr(G)\leq2m$ witnessed by a pair of drawings, one with $e_1$ crossing a heavy edge and other with $e_2$ crossing another heavy edge.
On the other hand, every uncrossed collection of drawings of $G$ must contain a drawing $D_1$ with uncrossed~$e_1$ and a drawing $D_2$ with uncrossed~$e_2$.
It cannot be $D_1=D_2$ since then $D_1$ would have a crossing of two other edges of weight $m\cdot m>2m$.
So each of $D_i$, $i=1,2$, has a crossing of $e_i$ with some $f_i$ of weight $m$;
in detail, if any of the $m$ edges parallel edges representing $f_i$ was free of crossings with other edges of $G$, then
we could redraw the remaining parallel edges representing $f_i$ closely along it to avoid such otherwise unavoidable crossing.
Consequently, we get $\Ucr(G)\geq\crg(D_1)+\crg(D_2)\geq2m$, as desired, and also~$\Um(G)=2$.
\end{proof}

With respect to the next claim, we remark that the exact value of $\crg(K_n)$ for every $n$ is still an open problem, 
although the right asymptotic $\crg(K_n)\in\Theta(n^4)$ follows directly from the famous Crossing Lemma \cite{ajtaiChvatalNewbornSzemeredi82,leighton83}.

\lv{
\begin{proposition}\label{prop:Kn}%
For every $n\geq5$, we have $\frac n6\cdot\crg(K_n)\leq \Ucr(K_n) \leq \frac{n^5}{96}$.
In particular, $\Ucr(K_n)\in\Theta(n^5)$.
\end{proposition}
}
\sv{%
\begin{proposition2rep}\label{prop:Kn}%
For every $n\geq5$, we have $\frac n6\cdot\crg(K_n)\leq \Ucr(K_n) \leq \frac{n^5}{96}$.
In particular, $\Ucr(K_n)\in\Theta(n^5)$.
\end{proposition2rep}
}
\begin{proof}

If we have an uncrossed collection of drawings of $K_n$, then every drawing has at least $\crg(K_n)$ crossings and at most $3n-6$ uncrossed edges.
Hence $$\Ucr(K_n)\geq \frac1{3n-6}{n\choose2}\cdot\crg(K_n)\geq \frac n6\cdot\crg(K_n).$$
On the other hand, denoting the vertices of $K_n$ by $x_1,\ldots,x_n$, we may choose a special spanning path $P_1:=(x_1,x_2,x_n,x_3,x_{n-1},\ldots,x_{\lceil n/2\rceil+1})$.
If we ``rotate'' $P_1$ $\lceil n/2\rceil$-times inside $K_n$ -- meaning to cyclically shift the indices
of~$x_{\star}$ in the path by $0,1,\ldots,$ $\lceil n/2\rceil-1$ modulo~$n$, we altogether cover all edges of $K_n$ at least once.
These copies of $P_1$ form a basis for our uncrossed collection of drawings;
up to automorphism, we draw $P_1$ uncrossed and horizontal, and distribute the remaining edges of $K_n$ above or below $P_1$ based on the parity of the left end.
One such drawing has the following number of crossings:
\begin{align*}
\sum_{i=1}^{n}\sum_{j=i+1}^{n} \left\lceil\frac{j-i-1}{2}\right\rceil(n-j) ~\leq&~
	\sum_{i=1}^{n}\sum_{j=i+1}^{n} \frac12(j-i-1)(n-j)
\\=&~ \frac1{48}n^4-\frac18n^3+\frac{11}{48}n^2-\frac18n
\end{align*}
Altogether, our collection has this upper bound on the number of crossings:
$$\Ucr(K_n) \leq \left(\frac1{48}n^4-\frac18n^3+\frac{11}{48}n^2-\frac18n\right)\cdot\left(\frac n2 +\frac12\right) \leq \frac1{96}n^5
.$$
(Assuming the natural conjectured value of $\crg(K_n)$, this gives the asymptotic ratio of $\frac14$ between the lower and upper bounds.)
\end{proof}

The simple lower-bound proof from Proposition~\ref{prop:Kn} can be easily generalized with the help of the aforementioned Crossing Lemma, and the
generalization is asymptotically tight by the example of~$K_n$ in Proposition~\ref{prop:Kn}.
\begin{theorem}\label{thm:lbUcr}
Let $G$ be a simple graph such that $|E(G)|\geq7|V(G)|$.
Then $\Ucr(G)\geq|E(G)|^4/\big(87\cdot|V(G)|^3\big)$.
\end{theorem}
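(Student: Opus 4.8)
The plan is to follow the elementary double-counting used in the lower bound of Proposition~\ref{prop:Kn}, replacing the explicit value $\crg(K_n)$ by a generic lower bound on $\crg(G)$ coming from the Crossing Lemma. Write $n=|V(G)|$ and $m=|E(G)|$. Two facts drive the argument: first, a single drawing in an uncrossed collection can leave only few edges uncrossed, so an uncrossed collection of a dense graph must be large; second, each of these (many) drawings is still a drawing of the whole dense graph $G$ and therefore carries at least $\crg(G)$ crossings. Multiplying the two bounds gives the result.

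First I would fix a crossing-optimal uncrossed collection $D_1,\dots,D_{k'}$ realizing $\Ucr(G)$. In any $D_i$, the set of edges that are not crossed induces a plane subdrawing of a simple graph on $n$ vertices, hence contains at most $3n-6$ edges. Since, by the uncrossed property, every one of the $m$ edges of $G$ is uncrossed in at least one $D_i$, these uncrossed edge sets together cover $E(G)$, forcing $k'\geq m/(3n-6)\geq m/(3n)$. Here the simplicity of $G$ (an explicit hypothesis of the theorem) is exactly what guarantees the planar edge bound $3n-6$.

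Next I would invoke the Crossing Lemma in the sharp quantitative form valid in the dense regime: since $m\geq 7n$, one has $\crg(G)\geq m^3/(29 n^2)$, where the constant $1/29$ is the best explicit constant currently known and its hypothesis is compatible with the threshold $m\geq 7n$. As each $\crg(D_i)\geq\crg(G)$, combining the two bounds yields
\[
\Ucr(G)=\sum_{i=1}^{k'}\crd(D_i)\ \geq\ k'\cdot\crg(G)\ \geq\ \frac{m}{3n}\cdot\frac{m^3}{29 n^2}\ =\ \frac{m^4}{87\,n^3},
\]
which is precisely the claimed inequality, the constant $87=3\cdot 29$ arising as the product of the two constants.

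The only genuine decision point, and hence the main ``obstacle'', is selecting the right quantitative version of the Crossing Lemma: any weaker constant $c$ in $\crg(G)\geq m^3/(c\,n^2)$ merely turns $87$ into $3c$, so landing on $87$ requires exactly the $m\geq 7n$, constant-$29$ variant (and checking that its density hypothesis is implied by $m\geq 7n$). Everything else is the elementary product displayed above. Finally I would remark that Proposition~\ref{prop:Kn} — the $K_n$ example, where $m=\Theta(n^2)$ makes both the planarity count and the Crossing Lemma asymptotically tight — shows that this estimate is best possible up to the constant, so no sharpening of the counting step is needed.
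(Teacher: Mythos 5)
Your proof is correct and follows essentially the same route as the paper: bound the number of drawings from below by $|E(G)|/(3|V(G)|-6)\geq|E(G)|/(3|V(G)|)$ via the planar bound on uncrossed edges, bound each drawing's crossings by Ackerman's constant-$29$ Crossing Lemma (whose density hypothesis is met since $|E(G)|\geq7|V(G)|$), and multiply to get $|E(G)|^4/(87\cdot|V(G)|^3)$. No gaps; your identification of $87=3\cdot29$ and the tightness remark via $K_n$ match the paper's presentation.
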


\begin{proof}
Let $D_1,\ldots,D_k$ be any uncrossed collection of drawings of $G$.
By the currently best variant of the Crossing Lemma -- in Ackerman~\cite{DBLP:journals/comgeo/Ackerman19},
we for all $i=1,\ldots,k$ have $\crd(D_i)\geq|E(G)|^3/\big(29\cdot|V(G)|^2\big)$.
Moreover, since a simple graph $G$ can have at most $3|V(G)|-6$ uncrossed edges in any plane drawing,
we get $k\geq|E(G)|/\big(3|V(G)|-6\big)\geq|E(G)|/\big(3|V(G)|\big)$, which concludes the universal lower bound
$\crd(D_1)+\ldots+\crd(D_k)\geq|E(G)|^4/\big(87\cdot|V(G)|^3\big)$.
\end{proof}

\section{Hardness reduction}\label{sec:hard}

Our hardness proof for the uncrossed crossing number builds on the ideas of \cite{DBLP:conf/compgeom/HlinenyD16}
(\Cref{thm:diaghard1}), which in turn is based on~\cite{CabelloM13}.

In the context of~\cite{DBLP:journals/jgt/PinontoanR03} we define a {\em tile}  $T=(H,a,b,c,d)$ where $H$ is a graph and $a,b,c,d$ 
is a sequence of distinct vertices, called here the {\em corners} of the tile.
The {\em inverted} tile $T^{\updownarrow}$ is the tile $(H,a,b,d,c)$.
A tile $T$ is {\em perfectly connected} if both $H$ and the subgraph $H-\{a,b,c,d\}$ are connected and no edge of $H$ has both ends in $\{a,b,c,d\}$.
A {\em tile drawing} of a tile $T=(H,a,b,c,d)$ is a drawing of the underlying graph $H$ in the
unit square such that the vertices $a,b,c,d$ are the upper left, lower left, lower right, and upper right corner, respectively.
The {\em tile crossing number $\tcrn(T)$} of a tile $T$ is the minimum number of crossings over all tile drawings of $T$.
A tile $T$ is {\em planar} if $\tcrn(T)=0$.

\begin{theorem}[{\cite[Definition~9 and Corollary~12]{DBLP:conf/compgeom/HlinenyD16}}]\label{thm:diaghard1}
The following problem is \NP-hard: given an integer $k$, and a perfectly connected planar tile~$T=(H,a,b,c,d)$ 
such that there exist $2k+1$ edge-disjoint paths between the vertices $a$ and $c$ in $H$ and the vertex $d$ is of degree~$1$, 
decide whether $\tcrn(T{}^{\updownarrow})\leq k$.
\end{theorem}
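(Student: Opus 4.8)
The plan is to prove \NP-hardness by reduction from the crossing number of \emph{near-planar} graphs (a planar graph together with a single extra edge), which is \NP-hard by Cabello and Mohar~\cite{CabelloM13}. This is the natural source problem because inverting an otherwise planar tile forces exactly one ``extra'' connection to be routed across an essentially planar structure, and the cheapest such routing is precisely a near-planar crossing number. Concretely, starting from an instance $(G+f,k)$ with $G$ planar and $f=uv$ the extra edge, I would build a perfectly connected planar tile $T=(H,a,b,c,d)$ with $d$ of degree~$1$ and with $2k+1$ edge-disjoint $a$--$c$ paths, arranged so that $\tcrn(T^{\updownarrow})=\crg(G+f)$ (up to a fixed normalization absorbed into $k$); then $\tcrn(T^{\updownarrow})\le k$ holds iff $\crg(G+f)\le k$.

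The construction proceeds in three layers. First, I realize a suitable subdivision of the planar part $G$ as the interior of the unit square, drawn so that $T$ itself is planar; this certifies $\tcrn(T)=0$ as required by the statement. Second, I let the degree-$1$ corner $d$ together with its unique pendant edge $e_d=dw$ play the role of one endpoint of the extra edge $f$: in $T$ the corner $d$ sits in the upper right, whereas in $T^{\updownarrow}$ it is pushed to the lower right, and this forced ``flip'' of $d$ relative to the opposite corner is what compels $e_d$ to be rerouted. Third, I install a rigid bundle of $2k+1$ edge-disjoint $a$--$c$ paths acting as an impassable \emph{wall}: since $a$ and $c$ are diagonal corners of $T$ but land on the \emph{same} (top) side of $T^{\updownarrow}$, this bundle separates the square so that the rerouted $e_d$ is obstructed exactly as $f$ is obstructed by~$G$. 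Throughout, I keep $H$ and $H-\{a,b,c,d\}$ connected and admit no edge joining two corners, so that $T$ is perfectly connected.

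I would then prove the two inequalities separately. For $\tcrn(T^{\updownarrow})\le\crg(G+f)$, I take an optimal near-planar drawing, place its underlying planar embedding of $G$ in the square with the corners in the inverted positions, route the wall paths uncrossed alongside it, and route $e_d$ exactly as $f$ was routed; the only crossings are the inherited crossings on $f$. For the converse $\tcrn(T^{\updownarrow})\ge\crg(G+f)$, I take an optimal tile drawing of $T^{\updownarrow}$ with at most $k$ crossings and exploit the count $2k+1>2k$: at most $2k$ path-edges can be incident to a crossing, so at least one of the $2k+1$ edge-disjoint $a$--$c$ paths is \emph{entirely uncrossed}. Since $a$ and $c$ both lie on the top side in $T^{\updownarrow}$, this uncrossed path is a Jordan arc that, together with perfect connectivity, rigidly pins the embedding of $G$; reading off how $e_d$ from the flipped corner $d$ must thread the wall then yields a drawing of $G+f$ with no more crossings.

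The main obstacle I expect is this converse, structure-forcing direction. I must exclude ``cheating'' tile drawings of $T^{\updownarrow}$ that exploit the extra topological freedom of the square --- routing parts of $G$ or of the wall in unintended ways --- to beat $\crg(G+f)$. The degree-$1$ condition on $d$ and the odd count $2k+1$ of edge-disjoint paths are precisely the gadget invariants that prevent this, and the crux is turning the intuitive statement ``a single uncrossed $a$--$c$ path pins the relevant part of the embedding'' into a rigorous topological argument, in the spirit of the tile machinery of Pinontoan and Richter~\cite{DBLP:journals/jgt/PinontoanR03} and the refinement of Hliněný and Derňár~\cite{DBLP:conf/compgeom/HlinenyD16}.
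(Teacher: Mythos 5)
First, a point of comparison: the paper itself contains no proof of this statement --- it is imported verbatim from Hliněný and Derňár \cite{DBLP:conf/compgeom/HlinenyD16} (their Definition~9 and Corollary~12) --- so your sketch can only be judged against that cited proof. In spirit you have reconstructed its skeleton correctly: the hardness ultimately rests on the Cabello--Mohar near-planar machinery \cite{CabelloM13}, the degree-$1$ corner $d$ plays the role of an endpoint of the ``extra'' edge, and the $2k+1$ edge-disjoint $a$--$c$ paths serve exactly the purpose you describe, since in any drawing with at most $k$ crossings at most $2k$ edges are crossed and hence one path survives uncrossed (the very counting trick the present paper reuses in the proof of \Cref{thm:uncrohard2}).

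However, as a standalone proof your plan has a genuine gap: the pivotal equivalence $\tcrn(T^{\updownarrow})=\crg(G+f)$ for a tile built \emph{black-box} from an arbitrary near-planar instance is asserted, not established, and both directions fail as stated. For $\tcrn(T^{\updownarrow})\le\crg(G+f)$, an optimal drawing of $G+f$ need not draw $G$ planarly, need not admit $a,b,c,d$ at the prescribed boundary positions, and need not allow your wall to be routed uncrossed alongside it; moreover $e_d$ ends at the boundary corner $d$ while $f$ ends at an interior vertex $v$, so ``route $e_d$ exactly as $f$ was routed'' silently presupposes that $v$ can be brought next to the lower-right corner for free --- extra gadgetry you have not specified. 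For the converse, a single uncrossed $a$--$c$ path does \emph{not} pin the embedding: the remaining $2k$ wall paths and all of $G$ may still be drawn, within the budget of $k$ crossings, in configurations corresponding to no drawing of $G+f$, so the tile could a priori be drawn cheaper than $\crg(G+f)$ and the claimed iff would collapse. This tension --- a wall attached to $G$ strongly enough to force structure in the converse yet loosely enough to be routable uncrossed in the forward direction --- is precisely why the cited proof does not reduce from the abstract near-planar crossing number problem; instead it builds the tile from the concrete Cabello--Mohar instances, whose internal frame (which becomes the bundle of $2k+1$ paths) makes the optimal drawings rigid by construction. Turning your sketch into a proof would essentially mean redoing that anchored analysis, i.e., the part you have explicitly deferred as ``the crux'' is the proof, not a technicality of it.
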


\begin{figure}[t]
$$
\begin{tikzpicture}[xscale=1,yscale=1.09]
\tikzstyle{every node}=[draw, shape=circle, minimum size=2pt,inner sep=1.4pt, fill=black]
\tikzstyle{every path}=[fill=lightgray]
\draw (2.5,2) -- (5,2) to [out=200,in=110] (5,0) -- (2.7,0) to [out=-5,in=290] (2.5,2) ;
\draw (0,2) to [out=290,in=50] (0,0) -- (2.5,2);
\draw (2.5,2) to [out=240,in=145] (2.33,0.15) -- (0,2);
\tikzstyle{every path}=[color=black]
\draw (2.5,2) node[label=above:{$\!\!c=c'\!\!$}, fill=red] (c) {};
\draw (2.5,0) node[label=below:{$\!\!\!d=d'\!\!\!$}] (d) {};
\draw (0,0) node[label=left:$b$] (b) {};
\draw (0,2) node[fill=red, label=left:$a$] (a) {};
\draw (5,0) node[fill=red, label=right:$a'$] (aa) {};
\draw (5,2) node[label=right:$b'$] (bb) {};
\draw (b)--(c) (2.33,0.15) -- (d) -- (2.7,0);
\draw[ultra thick, color=gray] (a)--(c) (c)--(aa);
\draw (1.25,1.6) node[draw=none, fill=none] {$H$} ;
\draw (3.75,1.6) node[draw=none, fill=none] {$H'$} ;
\draw (1.25,-0.25) node[draw=none, fill=none] {$(T{}^{\updownarrow})$} ;
\draw (3.75,-0.25) node[draw=none, fill=none] {$(T')$} ;
\begin{scope}[on background layer]
\draw[color=white!80!black, ultra thick] (2.5,1) ellipse (30mm and 18mm);
\end{scope}[on background layer]
\end{tikzpicture}
\qquad
\begin{tikzpicture}[xscale=1,yscale=1.09]
\tikzstyle{every node}=[draw, shape=circle, minimum size=2pt,inner sep=1.4pt, fill=black]
\tikzstyle{every path}=[fill=lightgray]
\draw (0,0) -- (2.5,0) to [out=110,in=175] (2.3,2) -- (0,2) to [out=-70,in=20] (0,0) ;
\draw (2.5,0) to [out=70,in=325] (2.68,1.85) -- (5,0);
\draw (5,0) to [out=110,in=230] (5,2) -- (2.5,0);
\tikzstyle{every path}=[color=black]
\draw (2.5,2) node[label=above:{$\!\!d=d'\!\!$}] (d) {};
\draw (2.5,0) node[label=below:{$\!\!\!c=c'\!\!\!$}, fill=red] (c) {};
\draw (0,0) node[label=left:$b$] (b) {};
\draw (0,2) node[fill=red, label=left:$a$] (a) {};
\draw (5,0) node[fill=red, label=right:$a'$] (aa) {};
\draw (5,2) node[label=right:$b'$] (bb) {};
\draw (aa) -- (d) -- (2.3,2);
\draw[ultra thick, color=gray] (a)--(c) (c)--(aa);
\draw (1.25,0.5) node[draw=none, fill=none] {$H$} ;
\draw (3.75,0.5) node[draw=none, fill=none] {$H'$} ;
\draw (1.25,-0.25) node[draw=none, fill=none] {$(T)$} ;
\draw (3.75,-0.25) node[draw=none, fill=none] {$(T'{}^{\updownarrow})$} ;
\begin{scope}[on background layer]
\draw[color=white!80!black, ultra thick] (2.5,1) ellipse (30mm and 18mm);
\end{scope}[on background layer]
\end{tikzpicture}
$$
\caption{A sketch of an uncrossed collection of two drawings of the graph $G$ in the proof of \Cref{thm:uncrohard2};
	the heavy outer cycle $C$ and the many $a$--$c$ and $a'$--$c'$ paths are outlined in thick gray.
	Only edges of $H$ are crossed on the left and only edges of $H'$ on the right.
If $\tcrn(T{}^{\updownarrow})\leq k$, then $\Ucr(G)\leq\Ucrk2(G)\leq 2k$.
}
\label{fig:diagdouble}
\end{figure}
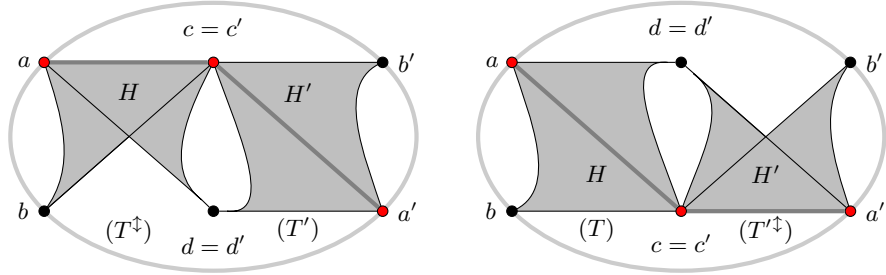

\begin{theorem}\label{thm:uncrohard2}
Given an integer $m$ and a graph $G$, the problems to decide whether $\Ucr(G)\leq m$ and whether $\Ucrk2(G)\leq m$ are both \NP-complete.
This holds even if $G$ contains an edge $f\in E(G)$ such that $G-f$ is planar.
\end{theorem}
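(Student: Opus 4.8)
The plan is to prove both statements at once by showing that, for the graph $G$ I construct, $\Ucr(G)=\Ucrk2(G)=2\tcrn(T^{\updownarrow})$. Membership in \NP{} I would handle by a routine guess-and-verify argument: since $\Um(G)\le|E(G)|$, it suffices to guess a collection of at most $|E(G)|$ drawings, each of (weighted) crossing number at most $m$ and hence of polynomial combinatorial size, and then to check in polynomial time that every edge is uncrossed in some drawing and that the total number of crossings is at most $m$. Hardness I would obtain by a reduction from the tile crossing number problem of \Cref{thm:diaghard1}.

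For the reduction, given a hard instance consisting of an integer $k$ and a perfectly connected planar tile $T=(H,a,b,c,d)$ with $2k+1$ edge-disjoint $a$--$c$ paths and $\deg(d)=1$, I would take a second isomorphic copy $T'=(H',a',b',c',d')$ and form $G$ by identifying $c=c'$ and $d=d'$ and attaching a heavy frame of weight $W\gg k$: a heavy cycle $C$ through $a,b,b',a'$ together with heavy $a$--$c$ and $a'$--$c$ paths, exactly as in \Cref{fig:diagdouble}. I set $m\df 2k$, and I take $f$ to be the unique edge of $H$ at the degree-one corner $d$; deleting $f$ decouples the placement of $d$ and lets both tiles be drawn normally in one plane drawing, so $G-f$ is planar. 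For the forward direction, assuming $\tcrn(T^{\updownarrow})\le k$, I would exhibit the two drawings of \Cref{fig:diagdouble}: in $D_1$ the shared pair $c,d$ is placed so that $H$ sits in inverted position (drawn with $\tcrn(T^{\updownarrow})\le k$ crossings) while $H'$ sits in normal position and is crossing-free, and in $D_2$ the roles of $H$ and $H'$ are swapped. Each edge of $H$ is then uncrossed in $D_2$, each edge of $H'$ in $D_1$, and the frame in both, so $\{D_1,D_2\}$ is uncrossed of total cost at most $2\tcrn(T^{\updownarrow})\le 2k$, giving $\Ucr(G)\le\Ucrk2(G)\le 2k$.

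The backward direction is the heart of the argument. Let $\mathcal D$ be any uncrossed collection of total cost at most $2k$. Since $W\gg k$, no heavy edge is crossed in any $D_i\in\mathcal D$, so in each $D_i$ the frame is a fixed plane skeleton that separates the interior into a region $R_H$ carrying $H$ and a region $R_{H'}$ carrying $H'$; hence all crossings of $D_i$ lie strictly inside one tile, with no inter-tile or tile--frame crossings. Invoking the $2k+1$ edge-disjoint $a$--$c$ paths exactly as in \cite{CabelloM13,DBLP:conf/compgeom/HlinenyD16}, I would argue that the sub-drawing of $H$ inside $R_H$ is forced to be a genuine tile drawing whose orientation is determined by the cyclic order of its corners along $\partial R_H$, and that the frame admits only the two placements of $c,d$ in which exactly one of $H,H'$ is inverted. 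Consequently every $D_i$ contains an inverted tile, so $\crd(D_i)\ge\tcrn(T^{\updownarrow})$. The uncrossed property then forces a drawing realizing the crossed edges of $H$ (one with $H$ normal) and a drawing realizing those of $H'$ (one with $H'$ normal); these are distinct, so $|\mathcal D|\ge 2$, and the total cost is at least $2\tcrn(T^{\updownarrow})$. Combining the two directions yields $\Ucrk2(G)\le 2k\iff\Ucr(G)\le 2k\iff\tcrn(T^{\updownarrow})\le k$, completing the reduction for both problems.

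The hard part will be the structural analysis of the backward direction: rigorously importing the tile machinery to the uncrossed-collection setting to show that a low-cost drawing forces each tile into a legitimate tile drawing of a fixed orientation, and that the heavy frame permits only the two complementary placements of the shared corners $c,d$. In particular, I expect the delicate point to be verifying that the inverted tile's sub-drawing is itself a valid tile drawing of $T^{\updownarrow}$, so that its crossing count is genuinely bounded below by $\tcrn(T^{\updownarrow})$ rather than by some weaker quantity.
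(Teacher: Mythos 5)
Your NP-membership sketch is fine, but the reduction has two genuine problems, one fatal to the construction itself. You have misread \Cref{fig:diagdouble}: the thick gray $a$--$c$ and $a'$--$c'$ lines there depict the $2k+1$ edge-disjoint paths that already exist \emph{inside} $H$ and $H'$ by the hypothesis of \Cref{thm:diaghard1}; they are not additional heavy elements. The paper's graph is just $C\cup H\cup H'$ with only the $4$-cycle $C$ heavy. Adding genuinely heavy $a$--$c$ and $a'$--$c$ paths of weight $W\gg k$, as you propose, breaks the forward direction: in any collection of total cost at most $2k<W$ no heavy edge is crossed in any drawing, so the uncrossed curve formed by the two heavy paths through $c$ together with $C$ separates the disk into a region whose boundary contains $b$ (which must contain all of the connected interior of $H$) and one containing $b'$ (hence the interior of $H'$). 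But the shared corner $d=d'$ has exactly one edge into each interior, so one of its two edges must cross the heavy separator, costing at least $W>m$. Thus in \emph{your} graph every instance, including YES instances, has $\Ucrk2(G)>m$; the two drawings you describe do not exist because there is nowhere to place $d$. (A smaller but related slip: the paper's cycle is $(a,b,a',b')$ in this cyclic order; with your order $(a,b,b',a')$ the corners $a,a'$ become adjacent on the frame and the parity argument forcing one inverted tile fails---both tiles can then be drawn crossing-free, one as a mirror image---so NO instances would become cheap.)

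Second, even after repairing the construction, your backward direction asserts exactly the structure that is not forced and that the paper deliberately avoids proving: with only the heavy cycle $C$, a cheap drawing may well have edges of $H$ crossing edges of $H'$ (such crossings are light), and the sub-drawing of $H$ need not be a valid tile drawing of either orientation, so the claims ``all crossings lie strictly inside one tile,'' ``the sub-drawing of $H$ is a genuine tile drawing,'' and ``every $D_i$ contains an inverted tile'' are unsubstantiated. The paper's actual mechanism is different and local: by averaging, some single drawing $D_1$ of the collection has at most $m/2=k$ crossings; the $2k+1$ edge-disjoint paths yield one uncrossed $a$--$a'$ path $P$ through $c=c'$ in $D_1$; then a surgery---delete $H'$ except $P'=P\cap H'$ and a shortest path $Q'$ from $d$ to $P'$, contract the crossing-free $P'$ so that $c=a'$, and reroute the single $H$-edge of $d$ along $Q'$ to a new vertex $d''$ identified with $d$---converts $D_1$ into a valid tile drawing of $T^{\updownarrow}$ with at most $k$ crossings, the position of $d$ relative to $P$ determining the inverted corner order $(a,b,d,c)$. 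Your proposal explicitly flags this step as ``the hard part'' and leaves it open, so the soundness of the reduction is missing; note also that the per-drawing lower bound $\crd(D_i)\geq\tcrn(T^{\updownarrow})$ and the equality $\Ucr(G)=2\tcrn(T^{\updownarrow})$ you aim for are stronger than needed---the threshold equivalence $\Ucrk2(G)\leq 2k\iff\Ucr(G)\leq 2k\iff\tcrn(T^{\updownarrow})\leq k$ suffices and is all the paper establishes.
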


\begin{proof}
Let $T=(H,a,b,c,d)$ be a tile as expected in \Cref{thm:diaghard1}, from which we reduce, and let $T'=(H',a',b',c',d')$ be a copy of $T$.
We take a $4$-cycle $C$ on the vertices $(a,b,a',b')$ in this order, such that each edge of $C$ is of weight $2k+1$, 
or equivalently composed of $2k+1$ parallel edges, and make $G$ as the union $C\cup H\cup H'$ with the identification $c=c'$ and $d=d'$.
We may pick $f$ as one of the two edges incident to $d$ and see that $G-f$ is planar.
Let~$m=2k$.

If $\tcrn(T{}^{\updownarrow})\leq k$, then we have a drawing $D_1$ of $G$ such that; $C$ is uncrossed,
$H$ is drawn as in the inverted tile $T{}^{\updownarrow}$ with $\leq k$ crossings and $H'$ is drawn as the planar tile $T'$ with no crossings.
See \Cref{fig:diagdouble}.
We let $D_2$ be the drawing symmetric to $D_1$, with $H'$ drawn as in the inverted tile $T'{}^{\updownarrow}$ and $H$ without crossings.
The collection $\{D_1,D_2\}$ certifies that \[\Ucr(G)\leq\Ucrk2(G)\leq k+k=m.\]

On the other hand, assume an uncrossed collection of at least two drawings of $G$ with at most $m$ crossings in total.
Then one of these drawings, say $D_1$, has at most $m/2=k$ crossings.
In particular, the cycle $C$ of weight $2k+1$ must be uncrossed (as a crossing ``hits'' at most two of the parallel edges composing each edge of~$C$),
and so $C$ makes the boundary of a valid tile drawing of connected $(H\cup H',a,b,a',b')$. 
Furthermore, by the assumption of \Cref{thm:diaghard1} on paths from $a$ to $c$, there exist $2k+1$ 
edge-disjoint paths from $a$ to $a'$ in $H\cup H'$, and again there is an $a$--$a'$ path $P$ without crossings on it in~$D_1$.

Consider now the position of the vertex $d=d'$ with respect to the uncrossed path $P$ in $D_1$; up to symmetry between $H$ and $H'$, let $d=d'$ lie ``below'' $P$
(with respect to having $C$ depicted with the vertices $a$ and $b'$ at the top, and $b,a'$ at the bottom, as in \Cref{fig:diagdouble} left).
Let $P'=P\cap H'$.
Let $Q'\subseteq H'$ be a shortest path from $d=d'$ to $V(P')$, which exists in a perfectly connected tile (we do not require $Q'$ to be crossing-free).
We now modify the drawing $D_1$ into $D_1'$ as follows; we delete the subdrawing of $H'$ except $P'\cup Q'$ from $D_1$,
then we contract the crossing-free path $P'$ making $c=a'$, and add a new vertex $d''$ in a close neighbourhood of $a'$ such that
$d''$ is adjacent only to $d$ by an edge drawn along $Q'$ (and then $Q'$ is deleted).
This does not introduce any more crossings than existed in $D_1$.
Since $d$ was of degree $1$ in $H$ and is of degree $2$ in $D_1'$, we may now simply identify $d=d''$ by prolonging the single edge of $d$ in~$H$.
By the assumption of $d$ being ``below'' $P$, we get that the cyclic order of the tile vertices in $D_1'$ is now $(a,b,d,c)$, 
and so $D_1'$ is a valid tile drawing of $(H,a,b,d,c)=T{}^{\updownarrow}$ with at most $\crd(D_1')\leq\crd(D_1)\leq k$ crossings.

Summarizing the previous, $\Ucr(G)\leq m$ or $\Ucrk2(G)\leq m$ implies $\tcrn(T{}^{\updownarrow})\leq\crd(D_1)\leq k$.
On the other hand, membership in \NP{} is shown in the standard way; we guess an uncrossed collection of drawings,
imagine every crossing point turned into a new vertex subdividing the participating edges, and test planarity.
  This completes the proof.
\end{proof}

\section{\FPT{} algorithm}\label{sec:FPT}

We now give the second main result, that computing the uncrossed crossing number is fixed-parameter tractable in the solution value.

\begin{theorem}\label{thm:fptunccr}
There is a quadratic \FPT{}-time algorithm with an integer parameter $k$ that, given a graph $G$ and an integer~$c$,
decides whether $\Ucrk c(G)\leq k$.
\end{theorem}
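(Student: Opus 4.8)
The plan is to adapt the irrelevant-vertex paradigm underlying Grohe's \FPT{} algorithm for the ordinary crossing number~\cite{Grohe04} (and its refinement~\cite{KawarabayashiR07}). First I would dispose of the trivial case: if $G$ is planar then $\Ucrk c(G)=0$ and we answer ``yes'', so assume $G$ is non-planar. Then every drawing of $G$ contains at least one crossing, hence any uncrossed collection witnessing $\Ucrk c(G)\le k$ uses at most $k$ drawings; we may thus replace $c$ by $\min(c,k)$ and henceforth search only for collections of at most $k$ drawings. Moreover, since the total number of crossings is at most $k$, the set $X\subseteq E(G)$ of edges crossed in at least one drawing satisfies $|X|\le 2k$, every crossing occurs between two edges of $X$, and $G-X$ is planar and drawn without crossings in each member of the collection. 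This bounds, as functions of the parameter $k$ alone, both the number of drawings and the combinatorial ``crossing pattern'' that a solution may exhibit.

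The core difficulty is that $G$ may have arbitrarily large treewidth while $\Ucrk c(G)$ stays small (for instance, a large planar grid with one extra edge), so we cannot simply reduce to bounded treewidth. Instead I would compute an approximate tree decomposition and, as long as its width exceeds a threshold $g(k)$, extract a large flat wall and argue that a vertex $v$ deep in its interior is \emph{irrelevant}: some crossing-optimal uncrossed collection can be assumed to place no crossing in a large planar neighbourhood of $v$, and every edge incident to that neighbourhood is then already drawn uncrossed there, so it need not be covered by the part of the collection affected by $v$; consequently deleting $v$ and rerouting through the wall yields an equivalent instance with the same answer. Iterating removes $\Oh{n}$ irrelevant vertices and terminates with an equivalent instance of treewidth at most $g(k)$, within the quadratic time budget as in~\cite{Grohe04}. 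Establishing this irrelevant-vertex lemma for the \emph{simultaneous} collection --- so that the chosen vertex is useless for crossing reduction in every drawing at once and is also dispensable for the uncrossed covering condition --- is the main obstacle, since different drawings may use different planar embeddings of the base $G-X$.

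Once the treewidth is bounded by $g(k)$, I would solve the residual instance exactly by expressing ``$\Ucrk c(G)\le k$'' as a property of the bounded-treewidth graph $G$ and invoking Courcelle's theorem, or equivalently by explicit dynamic programming over the tree decomposition. Using the bounds from the first paragraph, a witness is described by data of size bounded in $k$: the set $X$ with $|X|\le 2k$, the number $c'\le k$ of drawings, for each drawing a crossing pattern given by at most $k$ crossing pairs among $X$, and for each edge a drawing in which it is declared uncrossed. For each drawing one forms the planarization of its crossing pattern --- replacing every crossing by a degree-four vertex --- and requires the result to be planar, while the covering condition requires the declared-uncrossed edges to avoid all crossings of their drawing. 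Since planarity is expressible through forbidden minors and the planarization is a bounded $\mathrm{MSO}_2$-transduction of $G$, the whole statement is $\mathrm{MSO}_2$-definable for the fixed parameter $k$, so it is evaluated in linear time on the bounded-treewidth instance. Combining the $\Oh{n}$ rounds of irrelevant-vertex removal with this linear-time evaluation yields the claimed quadratic \FPT{} running time $\Oh{f(k)\cdot n^2}$.
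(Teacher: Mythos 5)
Your overall architecture matches the paper's: dispose of the planar case, observe that a nonplanar $G$ forces at most $k$ drawings so $c$ may be replaced by $\min(c,k)$, reduce to bounded treewidth by repeatedly shrinking large flat walls, and finish with an MSO$_2$ formula evaluated via Courcelle's theorem, with at most linearly many linear \FPT{}-time rounds giving quadratic time. The genuine gap is exactly the step you flag yourself: the irrelevant-vertex lemma for a \emph{simultaneous} collection is never established, and it is the entire technical content of the theorem. Moreover, your framing argues in the wrong direction --- ``some crossing-optimal collection can be assumed to place no crossing near $v$'' is a statement about solutions of $G$, whereas the reduction needs a \emph{lifting} statement: from an arbitrary collection witnessing $\Ucrk{c}(G')\leq k$ on the reduced instance $G'$, one must rebuild a collection for $G$ \emph{without creating any new crossing anywhere}, since a single new crossing could hit an edge that was declared uncrossed in that drawing and destroy the covering property. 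The paper explicitly notes that Grohe's own grid-reduction, which annotates edges as uncrossable, does not transfer to a collection of many drawings, so simply ``adapting'' \cite{Grohe04} as you propose is the approach the authors had to abandon.

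The paper fills the gap with an irrelevant-\emph{edge} step rather than a vertex: after obtaining a flat hexagonal $(4k+4)$-grid (\Cref{thm:gethexg} plus \Cref{lem:flatgrid}, the latter also supplying the correct ``no'' answer when $\crg(G)\geq k+1$, since $\Ucrk{c}(G)\geq\crg(G)$), it deletes one edge $e$ of the innermost principal cycle and invokes \Cref{lem:gridback}, following \cite{DBLP:journals/ejc/GeelenRS04}: the grid contains $2k+1$ pairwise disjoint principal rings, so in any drawing with at most $k$ crossings some ring is uncrossed, and the interior of that ring can be replaced wholesale by the planar subdrawing of the flat grid, restoring $e$ so that \emph{every crossing of the new drawing already existed in the old one} and $e$ itself is uncrossed. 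This strong per-drawing conclusion is applied independently to each member of the collection, which is precisely what makes $\min(\Ucrk{c}(G-e),k+1)=\min(\Ucrk{c}(G),k+1)$ sound; your vertex-deletion variant would need the same ring argument anyway, so nothing is gained and, without it, your claimed equivalence is unsupported. A secondary inaccuracy: encoding a drawing by ``at most $k$ crossing pairs among $X$'' underdetermines planarity when one edge is crossed several times, because the order of crossings along the edge matters; the paper sidesteps this by evaluating $\varphi_{c,k}$ on $G_1^{\bullet k}$ (each edge subdivided $k$ times) and guessing which subdivision vertices to identify, so the positions of the chosen subdivision vertices encode the order. That detail is fixable in your scheme by a bounded disjunction over orderings, but the missing reinsertion lemma is not cosmetic --- it is the theorem.
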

We first remark that only values of $c\leq k$ in \Cref{thm:fptunccr} are meaningful since, having an uncrossed
collection of $c>k$ drawings with the total number of crossings $k$, implies that one of the drawings is planar,
and so is $G$, which can be tested beforehand.
In particular, $\Ucr(G)\leq k$ $\iff$ $\Ucrk k(G)\leq k$.
Furthermore, if $\Ucr(G)\leq k$ in \Cref{thm:fptunccr}, then we can straightforwardly compute $\Um(G)$.

\Cref{thm:fptunccr} is analogous to the \FPT{} algorithm of Grohe~\cite{Grohe04} for the classical crossing number,
and our proof is on a high level very similar to~\cite{Grohe04}.
However, the technical details are different, not only because we need a slightly different formulation of the problem itself,
but also since the grid-reduction phase of the algorithm~\cite{Grohe04} uses annotation with uncrossable edges which
is not suitable when working with a collection of many drawings.

Let a {\em hexagonal graph} be a simple plane graph $H$ such that all faces of~$H$ are of length $6$, all vertices
of degree $2$ or $3$, and the vertices of degree $2$ occur only on the outer face of~$H$.
A hexagonal graph $H$ is called a {\em hexagonal $r$-grid} (also a ``wall'') if the following holds:
there are pairwise-disjoint cycles $C_1,\ldots,C_r\subseteq H$, called the {\em principal cycles},
such that $V(H)=V(C_1)\cup\cdots\cup V(C_r)$, the cycle $C_1$ bounds a hexagonal face,
and for $i=2,\ldots,r$ the cycle $C_i$ contains $C_{i-1}$ in its interior (they are nested from outermost $C_r$ to innermost~$C_1$).
See \Cref{fig:hexgrid}, and note that such $H$ is unique up to isomorphism for each $r$.

More generally, given a graph $G$, we call the hexagonal $r$-grid~$H$ a {\em hexagonal $r$-grid in $G$}, 
if there is a subgraph $H_1\subseteq G$ (also called a \emph{grid in~$G$}) which is isomorphic to a subdivision of $H$.
We canonically extend the meaning of principal cycles to~$H_1$.
The {interior vertices of $H$} are those not on the outermost principal cycle, and the \emph{interior vertices of~$H_1$} are those 
which are the interior vertices of $H$ or subdivisions of edges of $H$ incident to interior vertices of $H$.
A grid $H_1\subseteq G$ is \emph{flat} if the subgraph of $G$ induced on $V(H_1)$ and the vertices of all connected components
of $G-V(H_1)$ adjacent to (some) interior vertices of $H_1$ is planar.

\begin{figure}[tb]
	\centering
	\begin{tikzpicture}[scale=0.7]\small
	\tikzstyle{every node}=[draw, shape=circle, inner sep=1pt, fill=black]
	\foreach \x in {0,2.8,5.6,8.4} \foreach \y in {0,1.4,2.8,4.2}
		\draw (\x,\y) node {} -- (\x+1,\y) node {} -- (\x+1.4,\y+0.7) node {} 
		-- (\x+1,\y+1.4) node {} -- (\x,\y+1.4) node {} -- (\x-0.4,\y+0.7) node {} -- cycle;
	\foreach \x in {1.4,4.2,7} \foreach \y in {-0.7,2.1,4.9}
		\draw (\x,\y) node {} -- (\x+1,\y) node {} -- (\x+1.4,\y+0.7) node {} 
		-- (\x+1,\y+1.4) node {} -- (\x,\y+1.4) node {} -- (\x-0.4,\y+0.7) node {} -- cycle;
	\draw[very thick, dashed] (4.2,2.1) -- node[draw=none,fill=none,label=above:$\!C_1\!$] {} (5.2,2.1)
		 -- (5.6,2.8) -- (5.2,3.5) -- (4.2,3.5) -- (3.8,2.8) -- cycle;
	\draw[very thick, dashed] (4.2,0.7) -- node[draw=none,fill=none,label=above:$\!\!C_2\!\!$] {} (5.2,0.7)
		 -- (5.6,1.4) -- (6.6,1.4) -- (7,2.1) -- (6.6,2.8)
		 -- (7,3.5) -- (6.6,4.2) -- (5.6,4.2) -- (5.2,4.9) -- (4.2,4.9) -- (3.8,4.2) -- (2.8,4.2)
		 -- (2.4,3.5) -- (2.8,2.8) -- (2.4,2.1) -- (2.8,1.4) -- (3.8,1.4) -- cycle;
	\draw[very thick, dashed] (4.2,-0.7) -- node[draw=none,fill=none,label=above:$\!\!C_3\!\!$] {} (5.2,-0.7)
		 -- (5.6,0) -- (6.6,0) -- (7,0.7) -- (8,0.7)
		 -- (8.4,1.4) -- (8,2.1) -- (8.4,2.8) -- (8,3.5) -- (8.4,4.2) -- (8,4.9)
		 -- (7,4.9) -- (6.6,5.6) -- (5.6,5.6) -- (5.2,6.3) -- (4.2,6.3) -- (3.8,5.6) -- (2.8,5.6)
		 -- (2.4,4.9) -- (1.4,4.9) -- (1,4.2) -- (1.4,3.5) -- (1,2.8) -- (1.4,2.1) -- (1,1.4)
		 -- (1.4,0.7) -- (2.4,0.7) -- (2.8,0) -- (3.8,0) -- cycle;
	\end{tikzpicture}
	\caption{A fragment of a hexagonal graph, and the principal cycles $C_1,C_2,C_3$ (thick-dashed) defining a hexagonal $3$-grid.}
	\label{fig:hexgrid}
\end{figure}
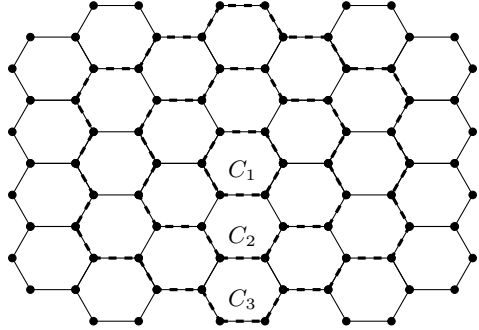

We proceed with the following steps.

\sv{%
\begin{theorem2rep}[\cite{ChuzhoyT21,PerkovicR00,RobertsonS95}]\apxmark\label{thm:gethexg}
For any $r\in\mathbb{N}$ there is a linear \FPT{}-time algorithm parameterized by $r$,
which given a graph $G$ outputs either a tree decomposition of $G$ of width at most $4^{10}r^{10}$, or a hexagonal $r$-grid in~$G$.
\end{theorem2rep}
}
\lv{%
\begin{theorem}[\cite{ChuzhoyT21,PerkovicR00,RobertsonS95}]%
  \label{thm:gethexg}
For any $r\in\mathbb{N}$ there is a linear \FPT{}-time algorithm parameterized by $r$,
which given a graph $G$ outputs either a tree decomposition of $G$ of width at most $4^{10}r^{10}$, or a hexagonal $r$-grid in~$G$.
\end{theorem}
}
	The statement is a combination of the following:
	\begin{theorem}[Chuzhoy and Tan~\cite{ChuzhoyT21}]
		For any $r \in \mathbb{N}$, there is an $r$-grid in any graph of treewidth at least $4^{10}r^{10}$.
	\end{theorem}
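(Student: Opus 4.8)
The plan is to follow the now-standard three-stage pipeline for polynomial excluded-grid theorems, developed by Chuzhoy and sharpened by Chuzhoy and Tan; note that in the present paper this bound is used purely as a black box, so what follows outlines the structure behind the cited statement rather than a self-contained argument. Write $k=4^{10}r^{10}$ for the threshold and assume $\operatorname{tw}(G)\geq k$.

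First I would pass from treewidth to well-linkedness. Using the standard duality between treewidth and brambles (equivalently, tangles), a graph with $\operatorname{tw}(G)\geq k$ contains a \emph{well-linked} set $T$ of size $\Omega(k)$, meaning a vertex set such that for every partition of $T$ into two equal halves there exist $|T|/2$ pairwise vertex-disjoint paths joining the two halves. This step is routine and costs only a constant factor.

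The second and central step is to turn the well-linked set into a \emph{path-of-sets system}: a sequence $S_1,\dots,S_\ell$ of pairwise-disjoint clusters, each inducing a connected subgraph and each well-linked with respect to its own interface, together with vertex-disjoint linkages of width $w$ joining consecutive clusters. The quantitative heart of Chuzhoy--Tan is a boosting argument showing that a well-linked set of size $\operatorname{poly}(\ell\cdot w)$ already yields such a system with $\ell=\Theta(r)$ and $w=\Theta(r)$. Controlling this polynomial sharply --- driving the exponent down so that it matches the $r^{10}$ bound --- is the main obstacle: earlier routing arguments give only weaker (higher-degree or super-polynomial) dependence, and the improvement rests on iteratively splitting clusters while preserving well-linkedness and rerouting the linkages through a carefully chosen flow and congestion argument.

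Finally I would extract the grid from the path-of-sets system. A system of length and width $\Theta(r)$ lets one route $r$ disjoint ``horizontal'' paths across the whole chain of clusters, and then, inside each cluster, use its internal well-linkedness to route $r$ ``vertical'' connections crossing the horizontal ones; standard disjointness and congestion-cleaning arguments then yield an $r\times r$ grid minor, equivalently a wall (hexagonal $r$-grid) up to the constant factors that are absorbed into the threshold. The two outer stages are robust and lose only constant factors, so the entire weight of the stated bound sits in the middle path-of-sets construction --- which is why the result is cited here rather than reproved.
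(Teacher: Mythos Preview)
The paper does not prove this statement at all: it is stated as a cited black-box result from Chuzhoy and Tan~\cite{ChuzhoyT21}, with no accompanying proof or sketch, and is then combined with the Robertson--Seymour/Perkovi\'c--Reed algorithmic component to yield Theorem~\ref{thm:gethexg}. You correctly identify this in your opening sentence, so there is nothing to compare --- the paper's ``proof'' is simply the citation. Your subsequent outline of the three-stage Chuzhoy--Tan argument (well-linked sets, path-of-sets system, grid extraction) is reasonable background commentary on the cited work, but it is not something the present paper attempts, so it neither agrees nor disagrees with anything here.
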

	
	\begin{theorem}[Robertson and Seymour~\cite{RobertsonS95}, Perkovi\'c and Reed~\cite{PerkovicR00}]
		Let $g$ be a computable function such that, for each $r\in \mathbb{N}$, there is an $r$-grid in any graph of treewidth at least $g(r)$.
		For any $r\in \mathbb{N}$, there is a linear-time algorithm which, given a graph $G$, outputs either a tree decomposition of $G$ of width at most $g(r)$, or an $r$-grid in~$G$.
	\end{theorem}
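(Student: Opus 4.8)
The plan is to prove Theorem~\ref{thm:gethexg} as the marriage of two ingredients of genuinely different character: a \emph{combinatorial} bound that forces a hexagonal grid once the treewidth is large enough, and a \emph{constructive} linear \FPT{}-time algorithm that actually delivers one of the two promised outputs. The statement is exactly what one obtains by feeding the right combinatorial bound into this algorithmic framework, and the specific constant $4^{10}r^{10}$ appearing both as the width guarantee and inside the parameter dependence is inherited from the combinatorial ingredient.

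For the combinatorial part I would invoke the polynomial Excluded Grid Theorem of Chuzhoy and Tan~\cite{ChuzhoyT21}: any graph of treewidth at least $4^{10}r^{10}$ contains an $r\times r$ grid as a minor. Since a hexagonal $r$-grid is subcubic, a minor model of it can always be realized as a topological minor, so a large enough solid-grid minor yields a subdivision of the hexagonal $r$-grid, which is precisely an ``$r$-grid in $G$'' in the sense used above. Two features of this theorem are essential: the bound is \emph{computable} (indeed an explicit polynomial), which is what lets the algorithmic framework use it as a target, and it is \emph{polynomial} in $r$, which is what pins the width guarantee of the output tree decomposition to $4^{10}r^{10}$ rather than to a tower-type function.

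For the algorithmic part I would use the constructive machinery of Robertson and Seymour~\cite{RobertsonS95} as sped up by Perković and Reed~\cite{PerkovicR00}, and assemble it as follows. Set $w=4^{10}r^{10}$ and first run a Bodlaender-style linear-time routine that either returns a tree decomposition of width at most $w$, giving the first output directly, or certifies that the treewidth of $G$ exceeds $w$. In the latter case the Chuzhoy--Tan bound guarantees that a hexagonal $r$-grid is present, but merely \emph{knowing} this is not enough for the theorem: the real work is to \emph{extract} such a grid subdivision from $G$ in time $f(r)\cdot|V(G)|$. This constructive extraction -- turning the existence argument behind the grid theorem into a linear \FPT{}-time procedure -- is the main obstacle, and is exactly the content provided by~\cite{RobertsonS95,PerkovicR00}. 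Composing the two linear \FPT{} subroutines keeps the overall running time linear in $|V(G)|$ with all superlinear dependence absorbed into a computable function of $r$, which yields the dichotomy as stated.
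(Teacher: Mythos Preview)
Your outline targets Theorem~\ref{thm:gethexg}, not the statement given. The statement here is the generic Robertson--Seymour/Perkovi\'c--Reed algorithmic tool for an \emph{arbitrary} computable function~$g$; the paper then plugs in the Chuzhoy--Tan bound $g(r)=4^{10}r^{10}$ to obtain Theorem~\ref{thm:gethexg}. By invoking Chuzhoy--Tan you are reproducing that latter derivation, not the statement itself.

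More seriously, the one step you correctly single out as ``the main obstacle'' --- actually \emph{producing} the grid subdivision in linear \FPT{}-time once the treewidth is certified large --- you discharge by citing~\cite{RobertsonS95,PerkovicR00}, which is precisely the theorem under discussion; this is circular. (Indeed~\cite{RobertsonS95} alone only gives a quadratic procedure, as the paper notes.) The paper's proof outline supplies exactly the missing idea: when the treewidth of $G$ exceeds $g(r)$, first find in linear time a subgraph $H\subseteq G$ whose treewidth lies between $g(r)$ and $2g(r)$; by the hypothesis on~$g$, such $H$ contains an $r$-grid, and since $H$ now has treewidth \emph{bounded by a function of $r$}, one can locate the grid subdivision in $H$ in linear \FPT{}-time via Courcelle's theorem, as in~\cite{KneisL09}. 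This bounded-treewidth-subgraph-plus-MSO trick is what converts the quadratic procedure into a linear one, and it is absent from your outline.
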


\sv{\begin{proof}[outline]}
  \lv{\begin{proof}[Outline of proof of \Cref{thm:gethexg}]}
	The preceding theorem was originally proved with quadra\-tic running ti\-me~\cite{RobertsonS95}, which was required for finding an $r$-grid in the case of treewidth at least $g(r)$.
	However, this can be avoided by using a linear-time algorithm to find a subgraph $H$ of $G$ with treewidth at least $g(r)$ and at most $2g(r)$ whenever the treewidth is at least~$g(r)$.
	Then a $r$-grid exists in $H$ by the precondition of the theorem and can be found in linear \FPT{}-time parameterized by $r$ in $H$ by using Courcelle's theorem~as~in~\cite{KneisL09}.
	In a combination, we thus get Theorem~\ref{thm:gethexg}.
\end{proof}

\begin{lemma}[\cite{DBLP:journals/jct/Thomassen97b,Grohe04}]\label{lem:flatgrid}
For $r_1,k\in\mathbb{N}$ and a graph $G$ we can in linear time compute the following:
Given a hexagonal $r$-grid in $G$, where $r=400r_1\lceil\sqrt{k}\rceil$, either output a flat hexagonal $r_1$-grid in $G$,
or answer truthfully that~\mbox{$\crg(G)\geq k+1$}.
\end{lemma}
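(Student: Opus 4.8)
The plan is to reduce the statement to many independent \emph{flatness} tests carried out in parallel. From the given hexagonal $r$-grid $H_1\subseteq G$ with $r=400r_1\lceil\sqrt{k}\rceil$, I would first carve out a large family $W_1,\ldots,W_N$ of pairwise vertex-disjoint hexagonal $r_1$-grids, arranged as an array inside $H_1$ and separated by buffer rings of principal cycles so that their interiors lie ``far apart''. A hexagonal $r$-grid contains on the order of $(r/r_1)^2$ such subgrids, and the choice $r=400r_1\lceil\sqrt{k}\rceil$ guarantees $N\geq k+1$ with room to spare (indeed $N=\Theta(k)$ with a large constant); this is the only place the precise value of $r$ is used. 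Extracting the $W_i$ from the explicitly given grid is purely combinatorial and runs in linear time.

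Second, for each $W_i$ I would test flatness directly from the definition: form the subgraph $N_i$ induced on $V(W_i)$ together with the vertices of all connected components of $G-V(W_i)$ adjacent to interior vertices of $W_i$, and test $N_i$ for planarity. Planarity testing is linear, and since the $W_i$ are vertex-disjoint the total work is linear in $|G|$. If some $N_i$ is planar, then $W_i$ is a flat hexagonal $r_1$-grid and I output it, finishing this case.

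Third, if every $W_i$ is non-flat, I must certify $\crg(G)\geq k+1$. Take any drawing $D$ of $G$; the goal is to exhibit $k+1$ distinct crossings. The conceptual point is that non-flatness of $W_i$ is a \emph{local} obstruction that forces at least one crossing ``near'' $W_i$, and that the separation between the subgrids keeps these forced crossings distinct. Concretely, I would associate to each $W_i$ a nonplanar subgraph $N_i'$ witnessing non-flatness and argue that, in $D$, the subdrawing of $N_i'$ already contains a crossing (since a nonplanar graph has $\crg(N_i')\geq1$); if the $N_i'$ can be chosen pairwise edge-disjoint, then these crossings are pairwise distinct crossings of $D$, and summing over the $\geq k+1$ indices yields $\crd(D)\geq k+1$. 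As $D$ is arbitrary, $\crg(G)\geq k+1$ follows.

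The hard part is exactly this localization. A priori a single component of $G-V(W_i)$ attaching to the interior of $W_i$ could be huge and reach the interiors of other subgrids, so the naive witnesses $N_i$ need not be edge-disjoint. This is where the large separation and the ``disk-like'' behaviour of big walls enter, via the structural results of Thomassen~\cite{DBLP:journals/jct/Thomassen97b} used by Grohe~\cite{Grohe04}: a wall drawn with at most $k$ crossings contains a clean subwall drawn without crossings, which therefore separates the plane into an inside and an outside, so that any component attached to its interior is forced to be drawn inside the corresponding disk. Having packed far more than $k+1$ subgrids, I can discard the bounded number that are ``contaminated'' by the at most $2k$ crossed edges and still retain $k+1$ subgrids whose non-flat witnesses are confined to pairwise-disjoint disk regions and are hence edge-disjoint, completing the counting. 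The bookkeeping that converts this topological intuition into genuinely edge-disjoint witnesses --- and the precise constant $400$ needed for the discarding step to leave $\geq k+1$ survivors --- is the most delicate portion of the argument.
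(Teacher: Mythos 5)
The paper offers no proof of this lemma of its own --- it is imported as a black box from Grohe~\cite{Grohe04}, who in turn relies on Thomassen~\cite{DBLP:journals/jct/Thomassen97b} --- and your outline reconstructs essentially that argument: pack $\Theta(k)$ buffered, disjoint $r_1$-subgrids into the $r$-grid, test each for flatness by a planarity test on its attachment graph $N_i$, and for correctness discard the subgrids touched by the at most $2k$ crossed edges and use the Thomassen-style unique-embedding/disk-confinement property of a completely clean subwall, exactly as in the cited proof. Two minor corrections: your justification ``since the $W_i$ are vertex-disjoint the total work is linear'' is inaccurate, because the graphs $N_i$ can overlap heavily (a single huge component of $G-V(W_i)$ may attach to interior vertices of many subgrids), so the honest bound is $O(k\cdot|G|)$, which is still linear \FPT{}-time as the lemma requires; and your edge-disjoint-witness counting in the third paragraph is cleaner in the contrapositive form Grohe actually uses --- in a drawing with at most $k$ crossings some surviving subgrid has all its (buffered) edges uncrossed, its buffer cycle is then a simple closed curve confining every interior-attached component to a crossing-free disk, hence $N_i$ is planar and $W_i$ is flat, contradicting the failed planarity tests --- which avoids having to argue that the non-flatness witnesses are pairwise edge-disjoint.
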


\sv{%
\begin{lemma2rep}[\cite{DBLP:journals/ejc/GeelenRS04}]\label{lem:gridback}\apxmark
For $k\in\mathbb{N}$, let a graph $G$ contain a flat hexagonal $(4k+4)$-grid $H\subseteq G$, and let $e\in E(H)$
be an edge of the innermost principal cycle of~$H$.
If there is a drawing $D_1$ of $G-e$ such that $\crd(D_1)\leq k$, then there is a drawing $D$ of $G$ such that
every crossing of $D$ exists also in $D_1$. (In particular, $\crd(D)\leq\crd(D_1)$ and $e$ is not crossed in~$D$.)
\end{lemma2rep}
}
\lv{%
\begin{lemma}[\cite{DBLP:journals/ejc/GeelenRS04}]\label{lem:gridback}%
For $k\in\mathbb{N}$, let a graph $G$ contain a flat hexagonal $(4k+4)$-grid $H\subseteq G$, and let $e\in E(H)$
be an edge of the innermost principal cycle of~$H$.
If there is a drawing $D_1$ of $G-e$ such that $\crd(D_1)\leq k$, then there is a drawing $D$ of $G$ such that
every crossing of $D$ exists also in $D_1$. (In particular, $\crd(D)\leq\crd(D_1)$ and $e$ is not crossed in~$D$.)
\end{lemma}
}
\sv{\begin{proof}[outline]}
  \lv{\begin{proof}[Outline of proof]}
The statement is only implicit in \cite[Section~3]{DBLP:journals/ejc/GeelenRS04}, however, with \Cref{lem:flatgrid}
at hand, the application of the ideas of \cite{DBLP:journals/ejc/GeelenRS04} is very straightforward.

Recall that, in the grid $H\subseteq G$, we have the principal cycles $C_1,C_2,\ldots,C_{4k+4}$ ordered from innermost $C_1$ to outermost $C_{4k+4}$.
We call a \emph{principal ring} $R_i$ the subgraph of $G$ induced by the following:
for an even index $i\leq 4k+2$, by the principal cycles $C_i\cup C_{i+1}$, and by all connected components of $G-V(H)$, which are adjacent to some interior vertices of the principal cycle $C_{i+2}$, but not adjacent to interior vertices of~$C_i$.
Then the cycle $C_1$ and the $2k+1$ rings $R_2,R_4,\ldots,R_{4k+2}$ are pairwise disjoint.

Since $\crd(D_1)\leq k$, one of the rings, say $R_i$, is uncrossed in~$D_1$, and so we may replace the apropriate part 
of the drawing $D_1$ with the planar subdrawing of $G$ in the flat grid $H$, precisely the planar subdrawing bounded by the ring~$R_i$.
(This way, we ``get'' the edge $e$ back into the drawing~$D$ from~$D_1$.)
\end{proof}

For the next lemma, we briefly mention that MSO$_2$ logic of graphs is logic dealing with multigraphs as two-sorted
relational structures with vertices, edges and their incidence relation.
In addition to standard propositional logic, MSO$_2$ quantifies over vertices and edges and their sets.
We write $H\models\varphi$ to mean that formula $\varphi$ evaluates to true on~$H$.
For a graph $H$, we denote by $H^{\bullet s}$ the graph obtained from $H$ by subdividing every edge with $s$ new vertices.
Closely following the ideas of Grohe~\cite{Grohe04}, we prove:
\sv{%
  \begin{lemma2rep}\label{lem:ucrmso}\apxmark
For $k,c\in\mathbb{N}$ such that $c\leq k$, there is a formula $\varphi_{c,k}$ in the MSO$_2$ logic,
such that for every graph $H$ we have $\Ucrk c(H)\leq k$ if and only if $H^{\bullet k}\models\varphi_{c,k}$.
\end{lemma2rep}
}
\lv{%
  \begin{lemma}\label{lem:ucrmso}%
For $k,c\in\mathbb{N}$ such that $c\leq k$, there is a formula $\varphi_{c,k}$ in the MSO$_2$ logic,
such that for every graph $H$ we have $\Ucrk c(H)\leq k$ if and only if $H^{\bullet k}\models\varphi_{c,k}$.
\end{lemma}
}

\begin{proof}
We closely follow the ideas of Grohe~\cite{Grohe04} when constructing $\varphi_{c,k}$.
We first remark that evaluating $\varphi_{c,k}$ on $H^{\bullet k}$ greatly simplifies the construction of $\varphi_{c,k}$,
but it is also possible to give a formula ``working'' directly on~$H$.

Since vertices of degree $2$ or $1$ do not play any role in evaluating crossings in our context, we may assume that
$H$ has minimum degree $3$, and so we can easily distinguish in $H^{\bullet k}$ between the original vertices of $H$
and the new \emph{subdivision vertices}.
Formula $\varphi_{c,k}$ starts with guessing, via existential quantifiers, $k$ vertex pairs $x_i,y_i$,~$i=1,\ldots,k$,
and verifying that these are only subdivision vertices, and that $x_1,\ldots,x_k$ are pairwise distinct and $y_1,\ldots,y_k$ are pairwise distinct.
Then the formula guesses vertex sets~$Z_j$, $j=1,\ldots,c$, and verifies that these sets partition the set $\{x_1,\ldots,x_k\}$.
Moreover, for every vertex set $S$ in $H^{\bullet k}$ such that $S$ is the set of subdivision vertices of some edge $e$ of~$H$
(which is easy to indentify as a maximal path of all internal degrees~$2$),
the formula verifies that for some $1\leq j\leq c$, the set $S$ is disjoint from $Z_j\not=\emptyset$ and from all $y_i$ such that~$x_i\in Z_j$
-- to mean that the edge $e$ is uncrossed in the corresponding drawing.

With respect to the previous, we denote by $H'_1,\ldots,H'_c$ the graphs, where $H'_j$ results from $H^{\bullet k}$ 
by identifying the vertex $x_i$ with~$y_i$ for $i\in\{1,\ldots,k\}$ such that~$x_i\in Z_j$
(note that no identification happens when $x_i=y_i$ in $H^{\bullet k}$).
By standard means, we can interpret each of the graphs $H'_j$ in $H^{\bullet k}$ within our formula,
and then test for planarity of each $H'_j$ such that $Z_j\not=\emptyset$ by looking for subdivisions of $K_5$ or~$K_{3,3}$
(see, e.g.,~\cite{Grohe04} for details of both tasks).
This finishes~$\varphi_{c,k}$.

Consider now that $H^{\bullet k}\models\varphi_{c,k}$.
By the above construction of the formula, this means that we have got a collection of $\leq\!c$ drawings of $H^{\bullet k}$,
one defined by each $H'_j$ such that $Z_j\not=\emptyset$, that every edge of $H$ is not participating in a crossing is some of these drawings,
and that there are altogether at most $k$ crossings in this collection.
Hence, $\Ucrk c(H)=\Ucrk c(H^{\bullet k})\leq k$.
On the other hand, having a collection of $\leq\!c$ drawings of $H$ witnessing $\Ucrk c(H)\leq k$, we can assign
the crossings in the collection to pairs of pairwise distinct vertices of $H^{\bullet k}$, which subdivide the
pairs of edges involved in the crossings and in the right order of crossings on each edge, such that each of
the graphs $H'_j$ resulting from identi\-fication will be planar, and hence this assignment will satisfy $\varphi_{c,k}$ on $H^{\bullet k}$.
\end{proof}

Now, we can finish the main task.
\sv{\begin{proof}[of \Cref{thm:fptunccr}]}
\lv{\begin{proof}[Proof of \Cref{thm:fptunccr}]}
Let $r\coloneqq400r_1\lceil\sqrt{k}\rceil$ where $r_1\coloneqq4k+4$, and $G_1\coloneqq G$.
Our algorithm first repeats the following until the listed treewidth condition on $G_1$ is met.
\begin{enumerate}[i.]%
\item Call the algorithm of Theorem~\ref{thm:gethexg} to compute a hexagonal $r$-grid $H_0\subseteq G_1$,
unless the tree-width of $G_1$ is at most $4^{10}r^{10}$.
Then apply \Cref{lem:flatgrid} onto $H_0$ to compute a flat hexagonal $r_1$-grid $H_1\subseteq G_1$,
unless $\crg(G_1)\geq k+1$.
\item Pick $e\in E(H_1)$ belonging to the innermost principal cycle of~$H_1$ arbitrarily and delete it,~$G_1'=G_1-e$.
Trivially, $\Ucrk c(G_1')\leq\Ucrk c(G_1)$, and from Lemma~\ref{lem:gridback} it follows that $\min(\Ucrk c(G_1'),k+1)=\min(\Ucrk c(G_1),k+1)$:
if $\Ucrk c(G_1')\leq k$, every drawing of the collection witnessing $\Ucrk c(G_1')$ has at most $k$ crossings, and so
Lemma~\ref{lem:gridback} is applicable and gives an (again) uncrossed collection of drawings of $G_1$, which implies $\Ucrk c(G_1)\leq\Ucrk c(G_1')$.
\item Continue to (i.) with $G_1\coloneqq G_1'$, i.e., removing the edge $e$.
\end{enumerate}

When the first routine is finished, we get $G_1\subseteq G$ such that;
\begin{itemize}%
\item $\crg(G_1)\geq k+1$, and hence $\crg(G)\geq k+1$ directly implying $\Ucrk c(G)\geq k+1$,
\item or the tree-width of $G_1$ is bounded by $4^{10}r^{10}$, and \[\min(\Ucrk c(G_1),k+1)=\min(\Ucrk c(G),k+1).\]
\end{itemize}
In the latter case, we apply classical Courcelle's theorem \cite{Courcelle90}
to decide whether $G_1^{\bullet k}\models\varphi_{c,k}$ for the formula of Lemma~\ref{lem:ucrmso}, using the tree decomposition
of $G_1$ computed by Theorem~\ref{thm:gethexg} and made into a tree decomposition of $G_1^{\bullet k}$.
This~is~the sought answer since, by previous, we have $\Ucrk c(G)\leq k$ $\iff$ $\Ucrk c(G_1)\leq k$.

Regarding runtime, we perform at most $|E(G)|$ iterations in the first stage, and each takes linear \FPT{}-time.
In the second stage we spend again linear \FPT{}-time, since testing $G_1^{\bullet k}\models\varphi_{c,k}$ is in linear \FPT{} with respect to the treewidth and the formula size as parameters.
\end{proof}

\begin{remark}
One can easily adapt Theorem~\ref{thm:gethexg} to actually compute a collection of drawings witnessing $\Ucrk c(G)\leq k$.
First, the constructive version of Courcelle's theorem, see~\cite{KneisL09}, run on $G_1\models\varphi_{c,k}$ computes witnesses of 
the positive answer in the form of a specification of which pairs of edges of $G_1$ cross and in which order in a collection giving $\Ucrk c(G_1)\leq k$.
By Lemma~\ref{lem:gridback}, only these crossings from $G_1$ are relevant for the corresponding uncrossed collection
of drawings of $G$, and hence we simply turn these crossings of $G_1$ into new vertices in $G$, and apply any standard planarity algorithm.
\end{remark}

\section{Conclusions}

Our work brings an interesting connection between two significantly studied concepts in the graph drawing area; the crossing number and the (outer)thick\-ness of graphs.
On the side closer to the classical crossing number, we suggest the uncrossed crossing number, and closer to the thickness, we propose the uncrossed number.
The initial results show that the uncrossed crossing number behaves quite similarly to the classical crossing number---and hence it seems
similarly hard to understand it in full generality. 
This may spark further interest in combinatorics research, and we hope that particular aspects of the uncrossed crossing number may be found useful in graph drawing.

In the combinatorial direction, for instance, it would be interesting to understand the structure of graphs critical for the uncrossed crossing number.
The results of \Cref{sec:FPT}, after an easy modification, also imply that these critical graphs do not contain large grids, and so have bounded tree-width.
Hence, one should ask whether critical graphs for the uncrossed crossing number resemble the already known asymptotic structure
of classical crossing-critical graphs~\cite{DBLP:conf/compgeom/DvorakHM18}.

\smallskip
On the other hand, our understanding of the uncrossed number is much more fuzzy and difficult to understand.
As we discussed, it can be 2-approxi\-ma\-ted by the outerthickness, but more detailed behavior is yet to be comprehended.
Therefore, based on our initial observation, we conjecture several hypotheses regarding this number.

We conjecture that the uncrossed number is \NP-hard (probably even para \NP-hard).
As the \NP-hardness of the outerthickness is surprisingly still unknown, exploring a related notion may give us more understanding to attack this problem as well.
Recently, there have been a progress in the follow-up paper~\cite{GDfollowup} showing that if the outherthickness problem is NP-hard so is the uncrossed number.
Probably even more challenging seems the understanding of the complexity of the crossing-optimal uncrossed number.

As the precise bound for thickness~\cite{Alekseev,BeinekeHararyMoon1964,Vasak} and outerthickness~\cite{Guy1990,GuyOuterthickness} of complete and complete bipartite graphs have been determined (almost\footnote{There are some possible exceptions in case of $\theta(K_{m,n})$ where the exact value is not known; see~\cite{Poranen} for further discussion.}) completely, we suggested exploring it for the uncrossed number.
Those numbers were completely determined in the follow-up paper~\cite{GDfollowup}.
Contrary to our original believes~\cite{OurGD}, there are values of $m\le n$ such that for complete bipartite graph $K_{m,n}$, we have
$\U(K_{m,n})\neq \theta_o(K_{m,n})$.
In particular, the uncrossed number is smaller by one for the infinite number of particular cases of $m$ and $n$ in the range $2m-2\le n$.
Consult~\cite{GDfollowup} for more details.
On the other hand, they confirmed that $\U(K_n)=\theta_o(K_n)$.
Such a behaviour brings up an interesting question whether a difference of those numbers can be arbitrary or not
in general graphs.

Another interesting direction is determining precise upper and lower bounds based on the minimum/maxi\-mum degree, as was done for outerthickness~\cite{Poranen} and thickness~\cite{Halton,Sykora,Wessel}.
A first non-trivial (better than thickness) lower-bound for general graphs was just given in~\cite[Theorem 3]{GDfollowup}:\[
  \U(G)\ge\left\lceil\frac{2m}{6n-10+\sqrt{(3n-5)^2-4m}}\right\rceil,
  \] where $n$ is the number of vertices and $m$ is the number of edges of $G$.
  This bound performs the better the more dense the graph is.
However, it is not clear how sharp the bound is and it would be very desirable to improve our understanding of this direction.

On the big picture, we proposed a general meta-problem that might highlight various aspects of a graph by providing a collection of drawing such that a particular property is met at least in one drawing within the collection.
In this paper, we proposed the first two(three) specific problems that capture this initial motivation, both naturally extending a seminal notion of the crossing number. 
We would like to see other possible approaches of what property one could impose on a collection of drawings that provide us with some new perspectives on the graph's structure.

\bibliographystyle{plainurl}
\bibliography{cro.bib}

\begin{thebibliography}{10}

\bibitem{DBLP:journals/comgeo/Ackerman19}
Eyal Ackerman.
\newblock On topological graphs with at most four crossings per edge.
\newblock {\em Computational Geometry}, 85, 2019.
\newblock \href {https://doi.org/10.1016/j.comgeo.2019.101574}
  {\path{doi:10.1016/j.comgeo.2019.101574}}.

\bibitem{ajtaiChvatalNewbornSzemeredi82}
Mikl\'os Ajtai, Va\v{s}ek Chv\'atal, Monroe~M. Newborn, and Endre Szemer\'edi.
\newblock Crossing-free subgraphs.
\newblock In Peter~L. Hammer, Alexander Rosa, Gert Sabidussi, and Jean Turgeon,
  editors, {\em Theory and Practice of Combinatorics}, volume~60 of {\em
  North-Holland Mathematics Studies}, pages 9--12. North-Holland, 1982.
\newblock \href {https://doi.org/10.1016/S0304-0208(08)73484-4}
  {\path{doi:10.1016/S0304-0208(08)73484-4}}.

\bibitem{Alekseev}
V.~B. Alekseev and V.~S. Gon{\v{c}}akov.
\newblock The thickness of an arbitrary complete graph.
\newblock {\em Mathematics of the {USSR}-Sbornik}, 30(2):187--202, February
  1976.
\newblock \href {https://doi.org/10.1070/sm1976v030n02abeh002267}
  {\path{doi:10.1070/sm1976v030n02abeh002267}}.

\bibitem{GDfollowup}
Martin Balko, Petr Hliněný, Tomáš Masařík, Joachim Orthaber, Birgit
  Vogtenhuber, and Mirko~H. Wagner.
\newblock On the uncrossed number of graphs.
\newblock In {\em Graph Drawing and Network Visualization - 32nd International
  Symposium, {GD} 2024, Vienna, Austria, September 18--20, 2024}, LIPIcs, pages
  18:1--18:13. Schloss Dagstuhl - Leibniz-Zentrum f{\"{u}}r Informatik, 2024.
\newblock \href {https://doi.org/10.4230/LIPIcs.GD.2024.18}
  {\path{doi:10.4230/LIPIcs.GD.2024.18}}.

\bibitem{BeinekeHararyMoon1964}
Lowell~W. Beineke, Frank Harary, and John~W. Moon.
\newblock On the thickness of the complete bipartite graph.
\newblock {\em Mathematical Proceedings of the Cambridge Philosophical
  Society}, 60(1):1--5, 1964.
\newblock \href {https://doi.org/10.1017/S0305004100037385}
  {\path{doi:10.1017/S0305004100037385}}.

\bibitem{DBLP:conf/gd/BiedlMRW98}
Therese~C. Biedl, Joe Marks, Kathy Ryall, and Sue Whitesides.
\newblock Graph multidrawing: Finding nice drawings without defining nice.
\newblock In {\em Graph Drawing, 6th International Symposium, {GD}'98,
  Montr{\'{e}}al, Canada, August 1998, Proceedings}, volume 1547 of {\em
  Lecture Notes in Computer Science}, pages 347--355. Springer, 1998.
\newblock \href {https://doi.org/0.1007/3-540-37623-2_26}
  {\path{doi:0.1007/3-540-37623-2_26}}.

\bibitem{Cabello13}
Sergio Cabello.
\newblock Hardness of approximation for crossing number.
\newblock {\em Discrete Comput. Geom.}, 49(2):348--358, March 2013.
\newblock \href {https://doi.org/10.1007/s00454-012-9440-6}
  {\path{doi:10.1007/s00454-012-9440-6}}.

\bibitem{CabelloM13}
Sergio Cabello and Bojan Mohar.
\newblock Adding one edge to planar graphs makes crossing number and
  1-planarity hard.
\newblock {\em SIAM J. Comput.}, 42(5):1803--1829, January 2013.
\newblock \href {https://doi.org/10.1137/120872310}
  {\path{doi:10.1137/120872310}}.

\bibitem{DBLP:conf/icalp/ChimaniH11}
Markus Chimani and Petr Hlin\v{e}n{\'{y}}.
\newblock A tighter insertion-based approximation of the crossing number.
\newblock In Luca Aceto, Monika Henzinger, and Jiř{\'{\i}} Sgall, editors,
  {\em Automata, Languages and Programming - 38th International Colloquium,
  {ICALP} 2011, Zurich, Switzerland, July 4-8, 2011, Proceedings, Part {I}},
  volume 6755 of {\em Lecture Notes in Computer Science}, pages 122--134.
  Springer, 2011.
\newblock \href {https://doi.org/10.1007/978-3-642-22006-7\_11}
  {\path{doi:10.1007/978-3-642-22006-7\_11}}.

\bibitem{DBLP:conf/compgeom/ChimaniH16}
Markus Chimani and Petr Hlin\v{e}n{\'{y}}.
\newblock Inserting multiple edges into a planar graph.
\newblock In {\em 32nd International Symposium on Computational Geometry
  ({SoCG} 2016)}, volume~51 of {\em Leibniz International Proceedings in
  Informatics (LIPIcs)}, pages 30:1--30:15. Schloss Dagstuhl - Leibniz-Zentrum
  f{\"{u}}r Informatik, 2016.
\newblock \href {https://doi.org/10.4230/LIPIcs.SoCG.2016.30}
  {\path{doi:10.4230/LIPIcs.SoCG.2016.30}}.

\bibitem{ChuzhoyT21}
Julia Chuzhoy and Zihan Tan.
\newblock Towards tight(er) bounds for the excluded grid theorem.
\newblock {\em Journal of Combinatorial Theory, Series {B}}, 146:219--265,
  2021.
\newblock \href {https://doi.org/10.1016/j.jctb.2020.09.010}
  {\path{doi:10.1016/j.jctb.2020.09.010}}.

\bibitem{Courcelle90}
Bruno Courcelle.
\newblock The monadic second-order logic of graphs. {I}. recognizable sets of
  finite graphs.
\newblock {\em Inf. Comput.}, 85(1):12--75, 1990.
\newblock \href {https://doi.org/10.1016/0890-5401(90)90043-H}
  {\path{doi:10.1016/0890-5401(90)90043-H}}.

\bibitem{DjumovicWood}
Vida Dujmović and David~R. Wood.
\newblock Thickness and antithickness of graphs.
\newblock {\em Journal of Computational Geometry}, page Vol. 9 No. 1, 2018.
\newblock \href {https://doi.org/10.20382/JOCG.V9I1A12}
  {\path{doi:10.20382/JOCG.V9I1A12}}.

\bibitem{Duncan}
Christian~A. Duncan.
\newblock On graph thickness, geometric thickness, and separator theorems.
\newblock {\em Computational Geometry}, 44(2):95--99, 2011.
\newblock Special issue of selected papers from the 21st Annual Canadian
  Conference on Computational Geometry.
\newblock \href {https://doi.org/10.1016/j.comgeo.2010.09.005}
  {\path{doi:10.1016/j.comgeo.2010.09.005}}.

\bibitem{DBLP:conf/compgeom/DvorakHM18}
Zden\v{e}k Dvo\v{r}{\'{a}}k, Petr Hlin\v{e}n{\'{y}}, and Bojan Mohar.
\newblock Structure and generation of crossing-critical graphs.
\newblock In {\em 34th International Symposium on Computational Geometry
  ({SoCG} 2018)}, volume~99 of {\em Leibniz International Proceedings in
  Informatics (LIPIcs)}, pages 33:1--33:14. Schloss Dagstuhl - Leibniz-Zentrum
  f{\"{u}}r Informatik, 2018.
\newblock \href {https://doi.org/10.4230/LIPIcs.SoCG.2018.33}
  {\path{doi:10.4230/LIPIcs.SoCG.2018.33}}.

\bibitem{GanianHKPV21}
Robert Ganian, Thekla Hamm, Fabian Klute, Irene Parada, and Birgit Vogtenhuber.
\newblock Crossing-optimal extension of simple drawings.
\newblock In Nikhil Bansal, Emanuela Merelli, and James Worrell, editors, {\em
  {ICALP} 2021}, volume 198 of {\em LIPIcs}, pages 72:1--72:17. Schloss
  Dagstuhl - Leibniz-Zentrum f{\"{u}}r Informatik, 2021.
\newblock \href {https://doi.org/10.4230/LIPIcs.ICALP.2021.72}
  {\path{doi:10.4230/LIPIcs.ICALP.2021.72}}.

\bibitem{GareyJ83}
Michael~R. Garey and David~S. Johnson.
\newblock Crossing number is {NP-complete}.
\newblock {\em SIAM J. Algebr. Discrete Methods}, 4(3):312--316, September
  1983.
\newblock \href {https://doi.org/10.1137/060403} {\path{doi:10.1137/060403}}.

\bibitem{DBLP:journals/ejc/GeelenRS04}
James~F. Geelen, R.~Bruce Richter, and Gelasio Salazar.
\newblock Embedding grids in surfaces.
\newblock {\em European Journal of Combinatorics}, 25(6):785--792, 2004.
\newblock \href {https://doi.org/10.1016/j.ejc.2003.07.007}
  {\path{doi:10.1016/j.ejc.2003.07.007}}.

\bibitem{Goncalves}
Daniel Gon\c{c}alves.
\newblock Edge partition of planar sraphs into two outerplanar graphs.
\newblock In {\em Proceedings of the Thirty-Seventh Annual ACM Symposium on
  Theory of Computing}, STOC '05, pages 504--512, New York, NY, USA, 2005.
  Association for Computing Machinery.
\newblock \href {https://doi.org/10.1145/1060590.1060666}
  {\path{doi:10.1145/1060590.1060666}}.

\bibitem{Grohe04}
Martin Grohe.
\newblock Computing crossing numbers in quadratic time.
\newblock {\em Journal of Computer and System Sciences}, 68(2):285--302, 2004.
\newblock \href {https://doi.org/10.1016/j.jcss.2003.07.008}
  {\path{doi:10.1016/j.jcss.2003.07.008}}.

\bibitem{DBLP:journals/algorithmica/GutwengerMW05}
Carsten Gutwenger, Petra Mutzel, and Ren{\'{e}} Weiskircher.
\newblock Inserting an edge into a planar graph.
\newblock {\em Algorithmica}, 41(4):289--308, 2005.
\newblock \href {https://doi.org/10.1007/s00453-004-1128-8}
  {\path{doi:10.1007/s00453-004-1128-8}}.

\bibitem{Guy1990}
Richard~K. Guy and Richard~J. Nowakowski.
\newblock {\em The Outerthickness \& Outercoarseness of Graphs {I}. The
  Complete Graph \& The $n$-Cube}, pages 297--310.
\newblock Physica-Verlag HD, Heidelberg, 1990.
\newblock \href {https://doi.org/10.1007/978-3-642-46908-4_34}
  {\path{doi:10.1007/978-3-642-46908-4_34}}.

\bibitem{GuyOuterthickness}
Richard~K. Guy and Richard~J. Nowakowski.
\newblock The outerthickness \& outercoarseness of graphs. {II}. the complete
  bipartite graph.
\newblock {\em Contemporary methods in graph theory}, pages 313--322, 1990.

\bibitem{Halton}
John~H. Halton.
\newblock On the thickness of graphs of given degree.
\newblock {\em Information Sciences}, 54(3):219--238, 1991.
\newblock \href {https://doi.org/10.1016/0020-0255(91)90052-V}
  {\path{doi:10.1016/0020-0255(91)90052-V}}.

\bibitem{DBLP:conf/compgeom/HammH22}
Thekla Hamm and Petr Hlin\v{e}n{\'{y}}.
\newblock Parameterised partially-predrawn crossing number.
\newblock In {\em 38th International Symposium on Computational Geometry
  ({SoCG} 2022)}, volume 224 of {\em LIPIcs}, pages 46:1--46:15. Schloss
  Dagstuhl - Leibniz-Zentrum f{\"{u}}r Informatik, 2022.
\newblock \href {https://doi.org/10.4230/LIPIcs.SoCG.2022.46}
  {\path{doi:10.4230/LIPIcs.SoCG.2022.46}}.

\bibitem{Hlineny06}
Petr Hlin\v{e}n{\'{y}}.
\newblock Crossing number is hard for cubic graphs.
\newblock {\em Journal of Combinatorial Theory, Series {B}}, 96(4):455--471,
  2006.
\newblock \href {https://doi.org/10.1016/j.jctb.2005.09.009}
  {\path{doi:10.1016/j.jctb.2005.09.009}}.

\bibitem{DBLP:conf/compgeom/HlinenyD16}
Petr Hlin\v{e}n{\'{y}} and Marek Der\v{n}{\'{a}}r.
\newblock Crossing number is hard for kernelization.
\newblock In {\em 32nd International Symposium on Computational Geometry, SoCG
  2016, June 14-18, 2016, Boston, MA, {USA}}, volume~51 of {\em LIPIcs}, pages
  42:1--42:10. Schloss Dagstuhl -- Leibniz-Zentrum fuer Informatik, 2016.
\newblock \href {https://doi.org/10.4230/LIPIcs.SoCG.2016.42}
  {\path{doi:10.4230/LIPIcs.SoCG.2016.42}}.

\bibitem{OurGD}
Petr Hliněný and Tomáš Masařík.
\newblock Minimizing an uncrossed collection of drawings.
\newblock In {\em Graph Drawing and Network Visualization - 31st International
  Symposium, {GD} 2023, Isola delle Femmine, Palermo, Italy, September 20-22,
  2023}, Lecture Notes in Computer Science, pages 110--123, 2023.
\newblock \href {https://doi.org/10.1007/978-3-031-49272-3_8}
  {\path{doi:10.1007/978-3-031-49272-3_8}}.

\bibitem{KawarabayashiR07}
Ken-ichi Kawarabayashi and Bruce Reed.
\newblock Computing crossing number in linear time.
\newblock In {\em Proceedings of the Thirty-Ninth Annual ACM Symposium on
  Theory of Computing}, STOC '07, page 382–390. Association for Computing
  Machinery, 2007.
\newblock \href {https://doi.org/10.1145/1250790.1250848}
  {\path{doi:10.1145/1250790.1250848}}.

\bibitem{KneisL09}
Joachim Kneis and Alexander Langer.
\newblock A practical approach to {C}ourcelle's theorem.
\newblock {\em Electronic Notes Theoretical Computer Science}, 251:65--81,
  2009.
\newblock \href {https://doi.org/10.1016/j.entcs.2009.08.028}
  {\path{doi:10.1016/j.entcs.2009.08.028}}.

\bibitem{leighton83}
Frank~T. Leighton.
\newblock {\em Complexity Issues in VLSI: Optimal Layouts for the
  Shuffle-exchange Graph and Other Networks}.
\newblock MIT Press, Cambridge, MA, USA, 1983.

\bibitem{Mansfield}
Anthony Mansfield.
\newblock Determining the thickness of graphs is {NP}-hard.
\newblock {\em Mathematical Proceedings of the Cambridge Philosophical
  Society}, 93(1):9--23, 1983.
\newblock \href {https://doi.org/10.1017/S030500410006028X}
  {\path{doi:10.1017/S030500410006028X}}.

\bibitem{DBLP:journals/gc/MutzelOS98}
Petra Mutzel, Thomas Odenthal, and Mark Scharbrodt.
\newblock The thickness of graphs: {A} survey.
\newblock {\em Graphs Comb.}, 14(1):59--73, 1998.
\newblock \href {https://doi.org/10.1007/PL00007219}
  {\path{doi:10.1007/PL00007219}}.

\bibitem{DBLP:journals/algorithmica/PelsmajerSS11}
Michael~J. Pelsmajer, Marcus Schaefer, and Daniel Štefankovič.
\newblock Crossing numbers of graphs with rotation systems.
\newblock {\em Algorithmica}, 60(3):679--702, 2011.
\newblock \href {https://doi.org/10.1007/s00453-009-9343-y}
  {\path{doi:10.1007/s00453-009-9343-y}}.

\bibitem{PerkovicR00}
Ljubomir Perković and Bruce~A. Reed.
\newblock An improved algorithm for finding tree decompositions of small width.
\newblock {\em International Journal of Foundations of Computer Science},
  11(3):365--371, 2000.
\newblock \href {https://doi.org/10.1142/S0129054100000247}
  {\path{doi:10.1142/S0129054100000247}}.

\bibitem{DBLP:journals/jgt/PinontoanR03}
Benny Pinontoan and R.~Bruce Richter.
\newblock Crossing numbers of sequences of graphs {II:} planar tiles.
\newblock {\em Journal of Graph Theory}, 42(4):332--341, 2003.
\newblock \href {https://doi.org/10.1002/jgt.10097}
  {\path{doi:10.1002/jgt.10097}}.

\bibitem{Poranen}
Timo Poranen and Erkki Mäkinen.
\newblock Remarks on the thickness and outerthickness of a graph.
\newblock {\em Computers \& Mathematics with Applications}, 50(1):249--254,
  2005.
\newblock \href {https://doi.org/10.1016/j.camwa.2004.10.048}
  {\path{doi:10.1016/j.camwa.2004.10.048}}.

\bibitem{RobertsonS95}
Neil Robertson and Paul~D. Seymour.
\newblock Graph minors. {XIII.} the disjoint paths problem.
\newblock {\em Journal of Combinatorial Theory, Series {B}}, 63(1):65--110,
  1995.
\newblock \href {https://doi.org/10.1006/jctb.1995.1006}
  {\path{doi:10.1006/jctb.1995.1006}}.

\bibitem{Sykora}
Ondrej Sýkora, László~A. Székely, and Imrich Vrťo.
\newblock A note on {H}alton's conjecture.
\newblock {\em Information Sciences}, 164(1):61--64, 2004.
\newblock \href {https://doi.org/10.1016/j.ins.2003.06.008}
  {\path{doi:10.1016/j.ins.2003.06.008}}.

\bibitem{DBLP:journals/jct/Thomassen97b}
Carsten Thomassen.
\newblock A simpler proof of the excluded minor theorem for higher surfaces.
\newblock {\em Journal of Combinatorial Theory, Series {B}}, 70(2):306--311,
  1997.
\newblock \href {https://doi.org/10.1006/jctb.1997.1761}
  {\path{doi:10.1006/jctb.1997.1761}}.

\bibitem{Vasak}
John~M. Vasak.
\newblock {\em The thickness of the complete graph.}
\newblock University of Illinois at Urbana-Champaign, 1976.
\newblock {PhD} thesis.

\bibitem{Wessel}
Walter Wessel.
\newblock {\"U}ber die abh{\"a}ngigkeit der dicke eines graphen von seinen
  knotenpunktvalenzen.
\newblock {\em Geometrie und Kombinatorik}, 2:235--238, 1984.

\bibitem{XuZha}
Baogang Xu and Xiaoya Zha.
\newblock Thickness and outerthickness for embedded graphs.
\newblock {\em Discrete Mathematics}, 341(6):1688--1695, 2018.
\newblock \href {https://doi.org/10.1016/j.disc.2018.02.024}
  {\path{doi:10.1016/j.disc.2018.02.024}}.

\end{thebibliography}

\end{document}